\newcommand{\F}{\mathbb{F}}
\newcommand{\Z}{\mathbb{Z}}
\newcommand{\divides}{\ensuremath{\ | \ }}
\newcommand{\inset}[1]{\ensuremath{\left\{ #1 \right\} }}
\newcommand{\inbrace}[1]{\ensuremath{\left\{ #1 \right\} }}
\newcommand{\inparen}[1]{\ensuremath{\left( #1 \right) }}
\newcommand{\inbrak}[1]{\ensuremath{\left[ #1 \right] }}
\newcommand{\inangle}[1]{\ensuremath{\left\langle #1 \right\rangle }}
\newcommand{\codepow}[2]{\ensuremath{#1^{\inangle{#2}}}}
\newcommand{\mindist}{\ensuremath{d_{min}}}
\newcommand{\suchthat}{\ensuremath{~\middle|~}}
\newcommand{\ie}{\emph{i.e.}, }
\newcommand{\defined}{\ensuremath{\stackrel{\mathrm{def}}{=}}}
\newcommand{\wt}{\ensuremath{\operatorname{wt}}}
\newcommand{\spn}{\ensuremath{\operatorname{span}}}
\newcommand{\sh}{\ensuremath{\mathfrak{s}}}
\newcommand{\shi}[1]{\ensuremath{\sh^{(#1)}}}
\newcommand{\vv}[1]{\ensuremath{\vec{#1}}}
\theoremstyle{definition}
\newtheorem{defn}{Definition}[section]
\theoremstyle{plain}
\theoremstyle{remark}
\newtheorem{lemma}{Lemma}
\newtheorem{theorem}{Theorem}
\newtheorem{remark}{Remark}
\newtheorem{definition}{Definition}
\newcommand{\coeff}{\ensuremath{\operatorname{coeff}}}
\newcommand{\interp}{\ensuremath{\operatorname{poly}}}
\newcommand{\supp}{\ensuremath{\operatorname{supp}}}
\begin{document}

\title{Properties of Constacyclic Codes Under the Schur Product}

\author{Brett Hemenway Falk\thanks{\texttt{fbrett@cis.upenn.edu}. Department of Computer Science, University of Pennsylvania} \and Nadia Heninger\thanks{\texttt{nadiah@cis.upenn.edu}. Department of Computer Science, University of Pennsylvania} \and Michael Rudow\thanks{\texttt{mrudow@andrew.cmu.edu}. Computer Science Department, Carnegie Mellon University}}



\maketitle

\begin{abstract}
For a subspace $W$ of a vector space $V$ of dimension $n$, the Schur-product space $\codepow{W}{ k}$ for $k \in \mathbb{N}$ is defined to be the span of all vectors formed by the component-wise multiplication of $k$ vectors in $W$. 
It is well known that repeated applications of the Schur product to the subspace $W$ creates subspaces $W, \codepow{W}{2 },\codepow{W}{3}, \ldots$ whose dimensions are monotonically non-decreasing. 
However, quantifying the structure and growth of such spaces remains an important open problem with applications to cryptography and coding theory. 
This paper characterizes how increasing powers of constacyclic codes grow under the Schur product
and gives necessary and sufficient criteria for when powers of the code and or dimension of the code are invariant under the Schur product.

\noindent \textbf{Keywords} Constacyclic, Schur Product, Codes
\end{abstract}

\section{Introduction}

This paper explores properties of constacyclic codes under the Schur-product (component-wise multiplication) operation.
For any code linear code, $C$, (\ie a linear subspace of $\F^n$) the Schur-product operation gives a natural way 
of constructing a new code
\[
	\codepow{C}{2} \defined \spn \inbrace{ \inparen{ c_1d_1,\ldots,c_nd_n) \suchthat (c_1,\ldots,c_n), (d_1,\ldots,d_n) \in C } }
\]
The behavior of codes $C$ under the Schur-product operation has many applications in coding theory and cryptography 
(See \cite{M12,R15,CCMZ15} for a surveys of known results).
The two main questions are how the dimension of the code grows, and how the minimum distance of the code shrinks under 
repeated applications of the Schur-product operation.
Recently, in \cite{C18}, Cascudo explores these two questions in the setting of cyclic codes under a single application of the Schur product. Moreover, Cascudo focuses on (1) determining how to represent the square of a cyclic code in a manner conducive to determining a tight minimum distance lower bound and (2) leveraging this representation to obtain families of cyclic codes whose squares have large dimension and minimum distance. 

In this work, we focus on \emph{constacyclic codes}, a widespread and important class of codes that includes 
cyclic codes, and negacyclic codes, both of which are widely used in coding theory and cryptography.

Cyclic error-correcting codes are codes (linear subspaces of $\F^n$) that are closed under the cyclic shift operation 
\[
	(c_1,c_2,\ldots,c_n) \mapsto (c_n,c_1,c_2,\ldots,c_{n-1})
\]
A cyclic code of length $n$ over a field $\F$ corresponds to an ideal in the ring $\F[x]/(x^n-1)$, where polynomials 
are identified as a vector of their coefficients.
A \emph{constacyclic} code is a code that is closed under the constant-shift map
\[
	(c_1,c_2,\ldots,c_n) \mapsto (ac_n,c_1,c_2,\ldots,c_{n-1})
\]
for some $a \in \F$.  When the length of the code, $n$, is co-prime to the characteristic of $\F$, then constacyclic codes correspond to ideals in the ring $\F[x]/(x^n-a)$.
See \cite{B68,ASR01} for a review of some of the basic properties of constacyclic codes.
The two most important types of constacyclic codes, are cyclic codes and \emph{negacyclic codes} (corresponding to $a=-1$, \ie ideals in the ring $\F[x]/(x^n+1)$).

The Schur-product operation arises naturally in many contexts throughout cryptography, including cryptanalyzing the McEliece Cryptosystem \cite{mceliece} and constructing ``multiplication-friendly" secret-sharing schemes \cite{CCCX09}.

The McEliece Cryptosystem has attracted significant interest from the cryptographic community since its introduction 40 years ago, and it is currently an active topic of research due to its perceived 
resistance to quantum attacks.  Over the years, many modifications have been proposed to the McEliece Cryptosystem, mostly with the aim of improving efficiency.
The original McEliece cryptosystem used binary Goppa codes, but many variants have been proposed using different codes that allow for smaller keys or more efficient encoding and decoding algorithms \cite{N86,S94,BL05,W06,berger2009,Wang16,dags17}.
Although the original McEliece Cryptosystem remains secure, most of these variants have been successfully cryptanalyzed \cite{MS07,FM08,W10,Faugère2010,COT14,FGOPT13,CGGOT14,MP17,CLT18,BC2018}. 
One of the key features in most of the successful cryptanalysis efforts has been that the proposed codes have \emph{small Schur-product dimension} which leads to key-recovery or indistinguishability attacks.
In particular, this lends credence to the idea that codes with small Schur-product dimension appear to be unsuitable for use in the McEliece framework. 

In addition to the McEliece Cryptosystem, cryptosystems based on the problem of Learning With Errors over Rings (Ring-LWE) have attracted much attention based on their resistance to quantum attacks and homomorphic properties \cite{LPR13,ABDGZ18}.
The security of modern Ring-LWE cryptosystems is closely related to the hardness of decoding certain types of \emph{negacyclic} codes, and thus a better characterization of the Schur-product dimension 
of negacyclic codes will provide a better understanding of the security of certain Ring-LWE based cryptosystems.

Although using codes with small Schur-product dimension appears to weaken the security of code-based cryptosystems like the McEliece Cryptosystem, 
these codes have many benefits that allow them to be used constructively in other areas of cryptography.
One of the most important constructive applications of codes with small Schur-product dimension is in the construction of 
``multiplication-friendly" secret-sharing protocols \cite{CCCX09}, which are the building blocks of efficient, secure multiparty computation (MPC) protocols 
\cite{CDM00,CDN15}.  Thus, identifying classes of codes with small Schur-product dimension immediately yields new candidates for efficient secret-sharing schemes and MPC protocols.

This paper focuses on identifying how the dimension of a constacyclic code grows under repeated applications of the Schur product.
In particular, we will give efficient algorithms that, using the generator polynomial of a constacyclic code, 
can efficiently compute the following:
\begin{enumerate}
	\item
		The maximum dimension the code will grow to under repeated applications of the Schur-product operation
	\item
		The generators for powers of the code after reaching the equilibrium dimension
	\item
		The minimum distance of the code after reaching the equilibrium dimension
	\item
		The criteria under which the code or powers of the code is invariant under powers of the Schur product
\end{enumerate}
In doing so, it will supplement the existing framework for the design and analysis of related cryptosystems.

\section{Preliminaries}

This section will highlight key definitions which will be used throughout the paper. These definitions are closely related to key properties that the theorems will prove, as well as standard building blocks that the proofs will use to do so. They are necessary to follow the proofs, underlying motivations, and consequences of the results.

\begin{defn}[Linear Code]

For any finite field, $\F$, and positive integer, $n$, a vector-subspace $C$ of $\F^n$ is called 
a \emph{(linear) code} of block-length $n$.  Vectors in $C$ are called \emph{codewords}.

\end{defn}

\begin{defn}[Schur Product of Vectors]

	Let $C$ be a code in $\F^n$ and let $c, d \in C$. 
	The \textit{Schur product of vectors} $c$ and $d$, denoted $c * d$, is the component-wise product of the codes 
	\[
		c * d \defined (c_1 \cdot d_1, \ldots, c_n \cdot d_n)
	\]
	For $i \ge 1$, we denote taking the Schur product of $c \in C$ with itself $i$ times as $\codepow{c}{i}$.
	We denote taking the Schur product of all vectors $v$ in a set $V$ as $\Pi_{v \in V} v$.
	
\end{defn}

This work will focus on the powers of codes under the Schur-product operation.
The Schur-power code (usually referred to simply as the \emph{power} of the code) is the 
smallest code that contains the Schur-powers of all its elements.

\begin{defn}[Powers of Codes]

	For a code, $C \in \F^n$, we denote the Schur-power code as
	\[
		\codepow{C}{1} = C
	\]
	and 
	\[
		\codepow{C}{i} = \spn\inparen{ \inset{ c*d \suchthat c \in C, d \in \codepow{C}{i-1}}  } \mbox{ for $i \ge 2$ }
	\]
\end{defn}

Throughout this work, we will identify polynomials of degree (at most) $n-1$ by their \emph{coefficient vectors}.

\begin{defn}[Polynomial Embeddings]
	Let $f(x) = x^n-a$ and $p(x) \in \F[x]  / f(x)$.  Then $p$ has a unique representation as
	\[
		p(x) = \sum_{i=0}^{n-1} c_i x^i
	\]
	We define 
	\[
		\coeff(p(x)) \defined (c_0, \ldots, c_{n-1})
	\]
	For any subset $C \subseteq \F^n$, we define 
	\[
		\interp(C) \defined \inset{ r(x) \suchthat \coeff(r(x)) \in C }.
	\]
	Similarly, for any vector $c = (c_0,\ldots,c_{n-1}) \in \F^n$, we define
	\[
		\interp(c) = c_0 + c_1 x + \cdots + c_{n-1} x^{n-1}
	\]
\end{defn}

We define the Schur power of a polynomial to be the Schur power of its coefficient vector.

\begin{defn}[Schur-powers of polynomials]
	If $h \in \F[x]/f(x)$, and $i \ge 0$, then 
	\[
		\codepow{h}{i} \defined \codepow{ \inparen{ \coeff(h(x)) } }{i} \in \F^n
	\]
\end{defn}

Thus $h^i$ corresponds to polynomial multiplication in the ring $\F[x]/f(x)$, whereas $\codepow{h}{i}$ corresponds to coordinate-wise powers of the coefficients of $h$.

\begin{defn}[Ideal Codes]
	Let $\F$ be a finite field, and $f(x) \in \F[x]$ a polynomial with $\deg(f) = n$.
	Then for any divisor $g(x)$ of $f(x)$, define the \emph{ideal code} generated by $g$ as
	\[
		C = \inset{ \coeff( g(x) \cdot h(x) \bmod f(x)) \suchthat h(x) \in \F[x]/f(x) } \subset \F^n
	\]
\end{defn}

This work will focus on a specific class of ideal codes known as \emph{constacyclic codes}.

\begin{defn}[Constacyclic Codes]
	Let $\F$ be a finite field, and $C$ an ideal code over $\F$ with modulus $f(x)$.	
	\begin{itemize}
		\item
			When $f(x) = x^n-a$, for some $a \in \F$, ideal codes over the ring $\F[x]/f(x)$ are  called \emph{constacyclic codes}. 
			Constacyclic codes are a subset of linear codes. Let $\ell$ denote the minimum natural number such that $a^\ell = 1$.
		\item
			When $f(x) = x^n-1$, ideals in $\F[x]/f(x)$ correspond to \emph{cyclic codes}.
		\item
			When $f(x) = x^n+1$, ideals in $\F[x]/f(x)$ correspond to \emph{negacyclic codes}.
	\end{itemize}
\end{defn}

\begin{defn}[Support of a vector]
	For a vector $c \in \F^n$, its \emph{support} is the set of indices of its nonzero entries
	\[
		\supp(c) = \inset{ i \suchthat c_i \ne 0 }
	\]
\end{defn}

\begin{defn}[Hamming weight]
	For a vector $c \in \F^n$, define
	\[
		\wt(c) = |\supp(c)|
	\]
\end{defn}

\begin{defn}[Minimum distance, minimum weight]
Let $C$ be a linear code over $\F^n$. Then the \emph{minimum distance} of $C$ is defined
\[
	\mindist(C) \defined \min_{c \ne d \in C} \wt(c - d).
\]
For linear codes, the minimum distance is equivalent to the minimum weight of a codeword in $C$
\[
	\mindist(C) = \min_{c \ne 0} \wt(c).
\]
\end{defn}

\begin{defn}[Generator Matrix ($G$)]
	Let $\F$ be a field and $C$ be a constacyclic code generated by polynomial $g(x)$ of degree $n-k$ which divides $x^n-a$. The \textit{generator matrix} for $C$ is defined as the $k \times n$ matrix where the $i$th row is equal to $\coeff(x^{i-1} \cdot g(x)$). Denote this matrix G. G is upper triangular because $\deg(g(x)) = n-k$.
	Thus if $g(x)= c_0 + c_1 x + \cdots + c_{n-k} x^{n-k}$, 
	\[
		G = \inbrak{ \begin{array}{ccccccccc}
			c_0 & c_1 & c_2 & \cdots & c_{n-k} & 0 & 0 & \cdots & 0 \\ 
			0 & c_0 & c_1 & \cdots & c_{n-k-1} & c_{n-k} & 0 & \cdots & 0 \\ 
			0 & 0 & c_0 & \cdots & c_{n-k-2} & c_{n-k-1} & c_{n-k} & \cdots & 0 \\ 
			\vdots & \ddots & \ddots & \ddots & \ddots & \ddots & \ddots & \ddots & \\
			0 & \cdots & 0 & 0 & c_0 & \cdots & c_{n-k-2} & c_{n-k-1} & c_{n-k} \\ 
			\end{array} }
	\]
\end{defn}

\begin{defn}[Standard Form Generator Matrix ($G'$)]
	Let $\F$ be a field and $C$ be a constacyclic code generated by polynomial $g(x)$ of degree $n-k$ which divides $x^n-a$. The \textit{standard form generator matrix} for $C$ is defined as the reduced row echelon form of the Generator Matrix $G$. The reduced Generator Matrix is  a $k \times n$ matrix whose leftmost $k \times k$ sub-matrix is $I_k$. Denote this matrix $G'$. Denote its $i$th row as $g_i$. Note that $g_k = \coeff(x^{k-1} \cdot g(x))$ because $G$ is upper triangular and Gaussian elimination on an upper triangular matrix doesn't change the final row.
	
\end{defn}

\begin{defn}[Shift]
	Let $\F$ be a finite field, and $C$ a constacyclic code over $\F$ generated by $g(x)$. Let $c = \coeff(p(x) \cdot g(x)) \in C$ for some polynomial $p(x)$. Then $\shi{i} c$ is defined to be
	\[
		\shi{i} c \defined \coeff( x^i \cdot p(x) g(x) )
	\]
	\ie if $c = (c_1,\ldots,c_{n})$ then 
	\[
		\sh c = (a \cdot c_n, c_1, c_2, \ldots, c_{n-1})
	\]
\end{defn}

\section{Prior Work}

There have been a number of recent results pertaining to properties of linear codes under the Schur-product operation \cite{M12,R15,CCMZ15}.
In this section, we highlight some of the relevant results.

If $C$ is a linear code, then the sequence of codes $\codepow{C}{1},\codepow{C}{2},\ldots$ has dimensions that are non-decreasing, 
and minimum distances that are non-increasing.
\begin{lemma}[\cite{R15}]
\label{R15-232}
For any linear code $C \subseteq \F^n$ and $z \ge 1$, 
\[
	\dim \inparen{ \codepow{C}{z+1} } \ge \dim \inparen{ \codepow{C}{z} }
\]
and
\[
	\mindist \inparen{ \codepow{C}{z+1} } \le \mindist \inparen{ \codepow{C}{z} }
\]
\end{lemma}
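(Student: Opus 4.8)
The plan is to prove the two inequalities separately; both rest on a single elementary fact about Schur powers. Unwinding the recursive definition of $\codepow{C}{z}$, every codeword of $\codepow{C}{z}$ is an $\F$-linear combination of componentwise products $c^{(1)} * \cdots * c^{(z)}$ with each $c^{(\ell)} \in C$. Consequently, if a coordinate $i$ satisfies $c_i = 0$ for every $c \in C$, then $w_i = 0$ for every $w \in \codepow{C}{z}$. Taking the contrapositive gives the fact I will use twice: \emph{for every $w \in \codepow{C}{z}$ and every $i \in \supp(w)$ there exists $c \in C$ with $c_i \ne 0$}. (The case $C = \inset{0}$ is trivial, so we may assume every power is nonzero; $\F$ being a field enters only to note, e.g., that $\codepow{c}{z} \ne 0$ whenever $c \ne 0$.)

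For the minimum-distance bound, I would take a nonzero $v \in \codepow{C}{z}$ of minimum weight, so $\wt(v) = \mindist(\codepow{C}{z})$, and pick any $i \in \supp(v)$. The fact above yields $c \in C$ with $c_i \ne 0$. Then $c * v \in \codepow{C}{z+1}$ by the recursive definition of powers, it is nonzero because $(c * v)_i = c_i v_i \ne 0$, and $\supp(c * v) \subseteq \supp(v)$, so $\wt(c * v) \le \wt(v)$. Hence $\mindist(\codepow{C}{z+1}) \le \wt(c * v) \le \mindist(\codepow{C}{z})$.

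For the dimension bound, write $D = \codepow{C}{z}$ and fix a basis $d^{(1)}, \ldots, d^{(k)}$ of $D$ in reduced row echelon form, with pivot positions $p_1 < \cdots < p_k$, so $d^{(j)}_{p_j} = 1$ and $d^{(j)}_{p_i} = 0$ for $i \ne j$. For each $j$, applying the fact to $w = d^{(j)}$ and $i = p_j$ produces $c^{(j)} \in C$ with $c^{(j)}_{p_j} \ne 0$. Set $e^{(j)} \defined c^{(j)} * d^{(j)} \in \codepow{C}{z+1}$. Then $e^{(j)}_{p_j} = c^{(j)}_{p_j} \ne 0$ while $e^{(j)}_{p_i} = 0$ for $i \ne j$, so the $k \times k$ submatrix of the matrix with rows $e^{(1)}, \ldots, e^{(k)}$ on columns $p_1, \ldots, p_k$ is diagonal with nonzero diagonal entries. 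Thus $e^{(1)}, \ldots, e^{(k)}$ are linearly independent in $\codepow{C}{z+1}$, giving $\dim \codepow{C}{z+1} \ge k = \dim \codepow{C}{z}$.

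The main, and really the only, obstacle is establishing the fact in the first paragraph — that each pivot coordinate $p_j$ is not identically zero on $C$ — which is precisely where the structure of $\codepow{C}{z}$ as a span of componentwise products of codewords of $C$ is used. Everything after that is routine bookkeeping with supports and an echelon basis.
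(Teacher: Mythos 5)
The paper never proves this lemma itself --- it is quoted verbatim from \cite{R15} as background --- so there is no in-paper proof to compare against; your argument has to stand on its own, and it does. Your single key fact (if some $w \in \codepow{C}{z}$ has $w_i \ne 0$ then some $c \in C$ has $c_i \ne 0$) follows correctly from bilinearity of the Schur product, which lets you unwind the recursive definition of $\codepow{C}{z}$ into a span of $z$-fold componentwise products. The distance bound is then immediate: $c * v$ is a codeword of $\codepow{C}{z+1}$, nonzero at coordinate $i$ because $\F$ has no zero divisors, with $\supp(c*v) \subseteq \supp(v)$. The dimension bound via a reduced-row-echelon basis of $\codepow{C}{z}$ is also sound: each $e^{(j)} = c^{(j)} * d^{(j)}$ is nonzero at pivot $p_j$ and zero at the other pivots, so the restriction to the pivot columns is diagonal with nonzero diagonal, forcing linear independence of the $e^{(j)}$ inside $\codepow{C}{z+1}$. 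The degenerate cases ($C = \inset{0}$, and well-definedness of $\mindist$ for higher powers via $\codepow{c}{z} \ne 0$ for $c \ne 0$) are handled. This is essentially the standard style of argument for these monotonicity facts (as in \cite{R15}), and its virtue here is that it is completely elementary: it needs no constacyclic structure, no generator polynomial, and it works even when the code has identically-zero coordinates, since the pivot trick multiplies each basis vector by a codeword chosen per pivot rather than assuming a single full-support codeword exists.
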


\begin{defn}[Hilbert Sequence \cite{R15}]
	\label{def:Hilbert}
	Let $C \subseteq \F^n$ be a linear code. The sequence of integers $\dim \inparen{ \codepow{C}{i} }$, for $i \ge 0$, is called the dimension sequence, or the \textit{Hilbert sequence}, of $C$. 
	The sequence of integers, $\mindist{ \inparen{\codepow{C}{i}} }$ for $i \ge 0$ is called the distance sequence of $C$.
\end{defn}

Since the dimension sequence is non-decreasing, and $\dim \inparen{\codepow{C}{i}} \le n$ there must exist a point at which the dimensions stop growing.

\begin{defn}[Castelnuovo-Mumford Regularity \cite{R15}]
	\label{def:CMR}

	The \textit{Castelnuovo-Mumford regularity} of a nonzero linear code $C \subseteq \F^n$ is the smallest integer $r = r(C) \ge 0$ such that 
	\[
		\dim \inparen{ \codepow{C}{r} } = \dim \inparen{ \codepow{C}{r+i} }
	\]
	for all $i \ge 0$.
\end{defn} 

The dimension sequence of a code is strictly increasing until it stabilizes, after which it never grows again.

\begin{lemma}[\cite{R15}]
\label{R15-233} 
Let $C$ be a linear code, $r(C)$ its Castelnuovo-Mumford Regularity, then for $z \in \inset{ 1,\ldots,r(C)-1}$
\[
	\dim \inparen{ \codepow{C}{z+1} } > \dim \inparen{ \codepow{C}{z} }
\]
\end{lemma}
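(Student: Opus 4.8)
The plan is to establish the contrapositive in the following sharpened form: \emph{if $\dim\codepow{C}{z} = \dim\codepow{C}{z+1}$ for some $z \ge 1$, then $\dim\codepow{C}{j} = \dim\codepow{C}{z}$ for every $j \ge z$}. Granting this, suppose $z \in \inset{1,\dots,r(C)-1}$. By Lemma~\ref{R15-232} we have $\dim\codepow{C}{z+1} \ge \dim\codepow{C}{z}$, so it suffices to rule out equality; but if equality held, the sharpened statement would give $\dim\codepow{C}{z} = \dim\codepow{C}{z+i}$ for all $i \ge 0$, forcing $r(C) \le z$ by the minimality in Definition~\ref{def:CMR}, contradicting $z \le r(C)-1$. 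Hence $\dim\codepow{C}{z+1} > \dim\codepow{C}{z}$. So from now on I assume $\dim\codepow{C}{z} = \dim\codepow{C}{z+1}$ with $z \ge 1$.

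First I would make two reductions, each leaving the whole sequence $\dim\codepow{C}{i}$ unchanged. (i) Restricting all coordinates to $\supp(C) = \bigcup_{c\in C}\supp(c)$ is an $\F$-linear isomorphism from $C$ (and from each $\codepow{C}{i}$, which is also supported there) onto a code of full support, and it commutes with the Schur product; so I may assume $\supp(C) = \inset{1,\dots,n}$. (ii) Extending scalars to the algebraic closure $\overline{\F}$ also commutes with spans and Schur products, so $\codepow{\inparen{C\otimes\overline{\F}}}{i} = \codepow{C}{i}\otimes\overline{\F}$ and all dimensions are preserved; so I may assume $\F$ is infinite. Over an infinite field, a full-support code contains a codeword $\vv{c}$ with $\supp(\vv{c}) = \inset{1,\dots,n}$: fixing a basis $b_1,\dots,b_m$ of $C$, the polynomial $\prod_{j=1}^{n}\inparen{\sum_i \lambda_i (b_i)_j}$ in $\lambda\in\F^m$ is a product of nonzero linear forms (nonzero because $C$ has full support), hence is a nonzero polynomial and has a non-root.

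The core argument is then short linear algebra. Since $\vv{c}\in C$, bilinearity of the Schur product gives $\vv{c} * \codepow{C}{z} \subseteq C * \codepow{C}{z} = \codepow{C}{z+1}$ (using $z \ge 1$). As $\vv{c}$ has no zero coordinate, $v \mapsto \vv{c}*v$ is injective, so $\dim\inparen{\vv{c}*\codepow{C}{z}} = \dim\codepow{C}{z} = \dim\codepow{C}{z+1}$, and therefore the inclusion is an equality: $\codepow{C}{z+1} = \vv{c}*\codepow{C}{z}$. I then propagate this upward: if $\codepow{C}{j+1} = \vv{c}*\codepow{C}{j}$ for some $j \ge z$, then using commutativity and bilinearity of $*$,
\[
	\codepow{C}{j+2} = C * \codepow{C}{j+1} = C*\inparen{\vv{c}*\codepow{C}{j}} = \vv{c}*\inparen{C*\codepow{C}{j}} = \vv{c}*\codepow{C}{j+1},
\]
and injectivity of multiplication by $\vv{c}$ again yields $\dim\codepow{C}{j+2} = \dim\codepow{C}{j+1}$. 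By induction $\dim\codepow{C}{j}$ is constant for $j\ge z$, which is the sharpened claim.

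The step I expect to be the main obstacle is the production of the full-support codeword $\vv{c}$: over small fields a code of full support need not contain any codeword of full weight (already over $\F_2$), which is exactly why the argument must pass to $\overline{\F}$ first. The accompanying point that needs care --- though it is routine --- is checking that extension of scalars commutes with the entire construction $C \mapsto \codepow{C}{i}$, so that passing to $\overline{\F}$ loses no dimension information; once $\vv{c}$ is available, the remaining ingredients (the identity $C*(\vv{c}*X) = \vv{c}*(C*X)$ and the injectivity of $x\mapsto \vv{c}*x$) are immediate.
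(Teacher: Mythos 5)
Your argument is correct. Note that the paper itself offers no proof of this lemma --- it is quoted from \cite{R15} --- so there is nothing in-text to compare against; your proof is essentially the standard one underlying that reference. The sharpened contrapositive (once two consecutive Schur powers have equal dimension, the dimension sequence is constant from that point on) is exactly the right statement to prove, and the three ingredients are all handled soundly: restriction to $\supp(C)$ and extension of scalars to $\overline{\F}$ both commute with spans and with the Schur product and so preserve the whole dimension sequence; over an infinite field a full-support code contains a codeword $\vv{c}$ of full weight (your product-of-nonzero-linear-forms argument is fine, and you correctly identify that this is precisely the step that forces the passage to $\overline{\F}$, since over small fields no such codeword need exist); and multiplication by $\vv{c}$ is an injective linear map with $\vv{c}\ast\codepow{C}{j}\subseteq\codepow{C}{j+1}$, which together with the assumed dimension equality gives $\codepow{C}{z+1}=\vv{c}\ast\codepow{C}{z}$ and then propagates upward by $C\ast\inparen{\vv{c}\ast X}=\vv{c}\ast\inparen{C\ast X}$. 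The final deduction of the lemma from the sharpened claim via the minimality in the definition of $r(C)$, combined with the monotonicity of Lemma~\ref{R15-232}, is also correct. I see no gap.
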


The dimension sequence of a code no longer increases once the code is generated by a basis with disjoint support.

\begin{lemma}[\cite{R15}]
\label{R15-236} 
	For any linear code $C \subseteq \F^n, z \ge 0, z \in \mathbb{Z}$. 
	Then $z \ge r(C)$ if and only if 
	$$\dim \inparen{ \codepow{C}{z} } = \dim\inparen{ \codepow{C}{z+1} }$$
	which occurs if and only if $\codepow{C}{z}$ is generated by a basis of codewords with disjoint supports.  
\end{lemma}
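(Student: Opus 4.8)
The plan is to reduce the whole statement to one linear-algebra fact about a single code and then attach the regularity bookkeeping. Write $D := \codepow{C}{z}$; a straightforward induction on the recursive definition of powers, using associativity of $*$, gives $\codepow{C}{2z} = D * D$, where $D * D := \spn\inset{u*v \suchthat u,v \in D}$. The core of the argument is the following, which I would prove first.

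\textbf{Key claim.} A linear code $E \subseteq \F^n$ satisfies $\dim(E * E) = \dim E$ if and only if $E$ has a basis of codewords with pairwise disjoint supports. For the ``if'' direction, given such a basis $b_1,\dots,b_m$ we have $b_i * b_j = 0$ for $i \neq j$, and $\support(\codepow{b_i}{2}) = \support(b_i)$ since powers of nonzero field elements are nonzero; hence $E*E = \spn\inset{\codepow{b_1}{2},\dots,\codepow{b_m}{2}}$, and these are linearly independent (disjoint supports), so $\dim(E*E) = m = \dim E$. For the converse -- the step I expect to be the main obstacle -- I would permute coordinates and row-reduce so that $E$ is presented by a standard-form generator matrix $[\,I_m \mid M\,]$ with rows $v_1,\dots,v_m$. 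Each $\codepow{v_i}{2}$ lies in $E*E$ and restricts on the first $m$ coordinates to the unit vector $e_i$, so the $\codepow{v_i}{2}$ are linearly independent; since $\dim(E*E)=m$, they span $E*E$. Therefore every $v_i*v_j$ with $i\neq j$ is a linear combination of the $\codepow{v_k}{2}$; but $v_i*v_j$ vanishes on the first $m$ coordinates while any nonzero such combination does not, forcing $v_i*v_j=0$ for all $i\neq j$. Reading this off the matrix, every column of $M$ has at most one nonzero entry, which together with the $I_m$ block says exactly that $v_1,\dots,v_m$ have pairwise disjoint supports.

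Granting the key claim, I would close the three-way equivalence (we may assume $z\geq 1$). The equivalence $z \geq r(C) \iff \dim(\codepow{C}{z}) = \dim(\codepow{C}{z+1})$ is immediate: ``$\Rightarrow$'' is the definition of Castelnuovo--Mumford regularity, and ``$\Leftarrow$'' is the contrapositive of Lemma~\ref{R15-233}. If these conditions hold then, by the definition of $r(C)$, $\dim(\codepow{C}{j})$ is constant for $j \geq z$, so $\dim(D*D) = \dim(\codepow{C}{2z}) = \dim(\codepow{C}{z}) = \dim D$, and the key claim produces a disjoint-support basis of $D = \codepow{C}{z}$. Conversely, if $D$ has a disjoint-support basis then the key claim gives $\dim(\codepow{C}{2z}) = \dim(D*D) = \dim D = \dim(\codepow{C}{z})$; since the dimension sequence is non-decreasing (Lemma~\ref{R15-232}) and $z \leq z+1 \leq 2z$, this forces $\dim(\codepow{C}{z+1}) = \dim(\codepow{C}{z})$. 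Chaining these implications yields all three equivalences.

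The only genuinely nontrivial ingredient is the ``only if'' half of the key claim; everything else is bookkeeping with the definition of regularity, Lemmas~\ref{R15-232}--\ref{R15-233}, and the power-composition identity $\codepow{C}{2z} = D*D$. I would be careful to phrase the claim's proof purely in terms of the vectors $v_i * v_j$ and their leading coordinates -- avoiding polarization or any division -- so that it goes through over fields of characteristic $2$ as well.
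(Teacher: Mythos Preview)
The paper does not supply a proof of this lemma: it is stated in the ``Prior Work'' section as a result quoted from \cite{R15}, so there is nothing in the present paper to compare your argument against.

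That said, your proposal is correct and is essentially the standard argument. The key claim is exactly right, and your treatment of the ``only if'' half --- passing to a standard-form generator $[I_m\mid M]$, noting that the $\codepow{v_i}{2}$ already give $m$ independent vectors in $E*E$, and then reading off $v_i*v_j=0$ from the first $m$ coordinates --- is clean and characteristic-free as you intended. The bookkeeping linking the claim to the regularity condition via Lemmas~\ref{R15-232} and~\ref{R15-233} and the identity $\codepow{C}{2z}=\codepow{C}{z}*\codepow{C}{z}$ is also fine. The only loose end is the boundary case $z=0$ (your inequality $z+1\le 2z$ needs $z\ge 1$); you flag this with ``we may assume $z\ge 1$'' but do not say why, so a one-line remark handling $z=0$ separately (or simply noting that $\codepow{C}{0}$ is the line spanned by the all-ones vector on $\supp(C)$, which trivially has a disjoint-support basis) would make the write-up complete.
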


\section{Basic known results about constacyclic codes}

In this section, we review some basic and well-known results about constacyclic codes.

We begin with a Lemma that determines a necessary and sufficient condition to show that a linear subspace is a constacyclic code.

\begin{lemma}[Closure under shifts]
	\label{lem:close_shift}
	A linear subspace $C \subset \F^n$ is a constacyclic code over the ring $F[x]/(x^n-a)$ if and only if 
	$C$ is closed under the shift operator 
	\begin{align*}
		\sh : \F^n &\rightarrow \F^n \\
				(c_0,\ldots,c_{n-1}) &\mapsto (a c_{n-1},c_0,\cdots,c_{n-2})
	\end{align*}
\end{lemma}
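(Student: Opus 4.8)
The plan is to prove both directions of the equivalence by translating between the vector picture (a subspace closed under $\sh$) and the polynomial picture (an ideal of $\F[x]/(x^n-a)$), using the fact that $\sh$ corresponds to multiplication by $x$.

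First I would establish the key identity: if $c = \coeff(p(x))$ for $p(x) = c_0 + c_1 x + \cdots + c_{n-1}x^{n-1}$, then $\sh c = \coeff\inparen{x \cdot p(x) \bmod (x^n - a)}$. This is a direct computation: $x \cdot p(x) = c_0 x + c_1 x^2 + \cdots + c_{n-2} x^{n-1} + c_{n-1} x^n$, and reducing $x^n \equiv a$ modulo $x^n - a$ gives $a c_{n-1} + c_0 x + c_1 x^2 + \cdots + c_{n-2} x^{n-1}$, whose coefficient vector is exactly $(a c_{n-1}, c_0, \ldots, c_{n-2})$. So under the identification $\coeff$, the shift operator $\sh$ is precisely the map ``multiply by $x$'' on $\F[x]/(x^n - a)$.

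For the forward direction, suppose $C$ is a constacyclic code, i.e. $\interp(C)$ is an ideal $I$ of $\F[x]/(x^n - a)$. Since $I$ is linear (an ideal is in particular an $\F$-subspace), $C$ is a linear subspace. Since $I$ is an ideal, it is closed under multiplication by $x$; by the identity above this says exactly that $C$ is closed under $\sh$. For the reverse direction, suppose $C$ is a linear subspace closed under $\sh$. Then $\interp(C)$ is an $\F$-subspace closed under multiplication by $x$, hence (iterating) closed under multiplication by $x^j$ for every $j \ge 0$, hence by linearity closed under multiplication by any polynomial $q(x) = \sum_j q_j x^j \in \F[x]/(x^n-a)$. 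That is exactly the statement that $\interp(C)$ is an ideal of $\F[x]/(x^n - a)$, so $C$ is a constacyclic code.

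There is no real obstacle here; the only point requiring a little care is that the definition of ``ideal code'' in the excerpt is phrased in terms of a \emph{generating} divisor $g(x)$ of $f(x) = x^n - a$, so one should note that an $\F$-subspace of $\F[x]/(x^n-a)$ closed under multiplication by $x$ is indeed an ideal in the usual ring-theoretic sense, and that since $\F[x]$ is a PID every such ideal of the quotient is generated by (the image of) a divisor of $x^n - a$. This last fact is standard and I would simply cite it (e.g. \cite{B68,ASR01}) rather than reprove it, so that the lemma reduces cleanly to the ``multiplication by $x$ equals $\sh$'' observation.
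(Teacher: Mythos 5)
Your proposal is correct. Note that the paper itself states this lemma without proof, listing it among the ``basic and well-known results about constacyclic codes,'' so there is no in-paper argument to compare against; your write-up supplies exactly the standard justification one would expect. The two ingredients you use --- the observation that under the $\coeff$/$\interp$ identification the shift $\sh$ is precisely multiplication by $x$ in $\F[x]/(x^n-a)$, and the fact that an $\F$-subspace of $\F[x]/(x^n-a)$ closed under multiplication by $x$ is an ideal, every ideal being generated by the image of a divisor of $x^n-a$ since $\F[x]$ is a PID --- are the right ones, and citing the standard references for the last fact (as the paper does implicitly via \cite{B68,ASR01}) is appropriate. One small point worth making explicit if this were inserted as a proof: the paper's definition of ``ideal code'' is phrased via a generating divisor $g(x)$ of $f(x)$, so the reverse direction does genuinely need the ideal-to-divisor correspondence you invoke (including the degenerate cases $g = f$ for the zero code and $g = 1$ for all of $\F^n$), and the forward direction needs only that multiplying $g(x)h(x)$ by $x$ yields $g(x)\cdot\inparen{x\,h(x)}$, which is again a multiple of $g(x)$ modulo $f(x)$ --- both of which your argument handles.
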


Lemma \ref{lem:basis} shows how to convert a generating polynomial for a code into a basis. 
\begin{lemma}[Basis of a constacyclic code]
	\label{lem:basis}
	For any constacyclic code $C$ of dimension $k$ over modulus $x^n-a$ and field $\F$, if $C$ is generated by a polynomial of minimal degree $g(x)$ of degree $n-k$, then $g(x) | x^n-a$ and $\{\coeff(x^i \cdot g(x)) \mid 0 \le i < k\}$ forms a basis for $C$.
\end{lemma}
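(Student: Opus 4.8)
The goal is to show two things: (1) that $g(x) \mid x^n - a$, and (2) that $\{\coeff(x^i g(x)) \mid 0 \le i < k\}$ is a basis for $C$.

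For (1): We have $C$ is an ideal in $\F[x]/(x^n-a)$, and $g(x)$ is the monic generator of minimal degree. Use polynomial division: write $x^n - a = q(x) g(x) + r(x)$ with $\deg r < \deg g$. Then $r(x) \equiv x^n - a - q(x)g(x) \equiv -q(x)g(x) \pmod{x^n-a}$, so $r(x)$ lies in the ideal $C$. Since $g$ has minimal degree among nonzero elements of $C$ and $\deg r < \deg g$, we must have $r = 0$, hence $g \mid x^n - a$.

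For (2): First observe that $\deg g = n-k$; this follows because $C$ has dimension $k$ and, as I'll argue, the $\coeff(x^i g(x))$ for $0 \le i < k$ are linearly independent and span $C$. Independence is immediate from a degree/leading-term argument: $\coeff(x^i g(x))$ has its lowest nonzero coordinate (leading term in the "reversed" sense) at position $i$, so these $k$ vectors are in echelon form and hence independent — provided $\deg(x^i g(x)) = (n-k) + i \le n-1$, i.e. $i \le k-1$, so no wraparound modulo $x^n - a$ occurs. For spanning: take any codeword $c = \coeff(p(x) g(x) \bmod (x^n-a))$. Since $g \mid x^n - a$, write $x^n - a = g(x) h(x)$ with $\deg h = k$. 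Reduce $p$ modulo $h$: $p = \alpha h + \beta$ with $\deg \beta < k$. Then $p g \equiv \beta g \pmod{x^n-a}$ since $\alpha h g = \alpha(x^n-a) \equiv 0$. Now $\deg(\beta g) < k + (n-k) = n$, so $\beta g$ needs no reduction, and $\beta g = \sum_{i<k} \beta_i x^i g(x)$ is an $\F$-linear combination of our candidate basis vectors. Hence they span $C$, and combined with independence they form a basis; this also confirms $\dim C = k$ forces $\deg g = n - k$ consistently.

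The main obstacle — really the only subtle point — is being careful about when reduction modulo $x^n - a$ is trivial versus nontrivial. The key insight that makes everything clean is that $\deg g = n - k$ together with $\deg \beta \le k-1$ guarantees $\deg(\beta g) \le n - 1$, so all the relevant products live in $\F[x]$ honestly (no modular wraparound), which is what lets the echelon/leading-term argument for independence and the spanning argument both go through without case analysis. One should state at the outset that minimality of $\deg g$ plus the division argument in (1) gives $g \mid x^n-a$, and then use the cofactor $h = (x^n-a)/g$ of degree $k$ as the modulus for reducing arbitrary message polynomials.
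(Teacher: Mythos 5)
The paper states this lemma without proof, citing it as a basic known fact about constacyclic codes, so there is no in-paper argument to compare against; your proof is the standard textbook one and it is correct. Both halves are sound: the division-with-remainder argument gives $g(x)\mid x^n-a$ from minimality of $\deg g$, and the observation that $\deg(x^i g)\le n-1$ for $i<k$ (no wraparound) makes the echelon independence argument and the reduction of an arbitrary message polynomial modulo the cofactor $h=(x^n-a)/g$ go through exactly as you describe.
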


The following lemma will be useful to determine whether a vector is nonzero.
\begin{lemma}[Consecutive zeros of a constacyclic code]
	\label{lem:consecutive}
	Let $C \subset \F^n$ be a constacyclic code of dimension $k$. Then $c \in C$ has $k$ consecutive zeros if and only if $c = 0^n$.
\end{lemma}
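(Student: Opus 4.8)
The plan is to use the correspondence between $C$ and an ideal in $R = \F[x]/(x^n-a)$ together with Lemma~\ref{lem:basis}. Let $g(x)$ be the generator polynomial of minimal degree $n-k$, so that $\deg g = n-k$ and $g \mid x^n - a$. The key observation is that "$c$ has $k$ consecutive zeros" must be interpreted cyclically (up to the constacyclic twist): positions are indexed modulo $n$, and applying the shift operator $\sh$ rotates the coordinates (scaling the wrapped entry by $a$). Since $a \neq 0$ (as $a^\ell = 1$), $\sh$ preserves both support structure and the property of being zero, and it maps $C$ to itself by Lemma~\ref{lem:close_shift}.

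First I would prove the nontrivial direction: if $c \in C$ has $k$ consecutive zero coordinates, then $c = 0^n$. Suppose $c \neq 0$. By applying a suitable power of $\sh$ (which is a bijection on $C$ by Lemma~\ref{lem:close_shift}, since $a$ is a unit), I may assume the $k$ consecutive zeros are in the last $k$ positions, i.e. $c = (c_0, \ldots, c_{n-k-1}, 0, \ldots, 0)$, so $\interp(c)$ is a polynomial of degree at most $n-k-1$. But $\interp(c)$ lies in the ideal generated by $g(x)$ in $R$; since $\deg \interp(c) < n$, reduction modulo $x^n - a$ is trivial, so $\interp(c)$ is an actual polynomial multiple of $g(x)$ in $\F[x]$ of degree $< n-k = \deg g$. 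The only such multiple is $0$, so $c = 0$, a contradiction. (One subtlety: I should make sure the "$k$ consecutive positions" wrapping around the end is handled — but that is exactly what the cyclic shift normalization takes care of; rotating brings any window of $k$ consecutive indices to the terminal block $\{n-k, \ldots, n-1\}$.)

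The converse direction is immediate: if $c = 0^n$ then trivially it has $k$ consecutive zeros (indeed all $n$ coordinates are zero, and $k \le n$).

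The main obstacle I anticipate is the bookkeeping around the shift operator: verifying carefully that $\shi{j}$ is a bijection on $C$ and that it carries a codeword with a length-$k$ zero window (possibly wrapping around the boundary) to one whose zero window is the terminal block $\{n-k,\dots,n-1\}$, so that the degree argument applies cleanly. This hinges on $a \neq 0$, which holds because $a$ has finite multiplicative order $\ell$. Once the codeword is normalized to have degree $< n - k$, the contradiction with minimality of $\deg g$ (via Lemma~\ref{lem:basis}) is the short and clean finish.
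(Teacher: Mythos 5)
Your proof is correct, but it takes a different route from the paper. The paper also normalizes by a shift, but it moves the $k$ zeros to the \emph{front} and then argues linear-algebraically with the standard-form generator matrix $G' = [I_k \mid M]$: writing $c' = \sum_j a_j g_j$, the identity block forces each $a_j$ to equal one of the first $k$ (zero) coordinates of $c'$, so $c' = 0$. You instead move the zeros to the \emph{back} and argue with the minimal-degree generator: $\interp(c')$ is a polynomial multiple of $g(x)$ in $\F[x]$ of degree less than $\deg g = n-k$, hence zero. Both proofs ultimately express the same fact — a codeword is determined by its values on $k$ suitably placed ``information'' positions — but the paper's version needs only that $G'$ is in reduced row echelon form, while yours leans on the ideal/polynomial structure. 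One step in your write-up deserves an explicit line: from ``$\interp(c')$ lies in the ideal generated by $g$'' it does not follow merely from $\deg \interp(c') < n$ that $\interp(c')$ is a multiple of $g$ in $\F[x]$, since a representative $g(x)h(x)$ may have degree up to $2n-2$ before reduction. You can close this either by invoking $g \mid x^n - a$ (so divisibility by $g$ survives reduction modulo $x^n-a$), or, as Lemma~\ref{lem:basis} permits, by writing $c' $ as $\coeff(g(x)h(x))$ with $\deg h < k$, so that no reduction occurs and $\interp(c') = g(x)h(x)$ holds in $\F[x]$; with that sentence added, your degree argument finishes cleanly. Your handling of the wrap-around window and the bijectivity of $\sh$ (using $a \neq 0$) matches the paper's implicit treatment.
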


\begin{proof}
	In the first direction, if $c = 0^n$ then clearly it contains $n$ consecutive $0$s hence $k$ consecutive zeroes.
	
	In the second direction, suppose $c$ has $k$ consecutive zeroes starting in position $i$ when $c$ is considered as a one-indexed array. Let $c' = \shi{n-i+1} c$ and note that $c'$ has the same number of nonzero indices as $c$ and begins with $k$ consecutive zeroes.

	Let $G'$ be the standard-form generator matrix for $C$, with $g_1,\ldots,g_k$ the rows of $G'$.  Then $C$ is generated by $g_1,\ldots,g_k$. 
	The first $k$ positions of $c'$ are zero, hence 
	$$c' = \sum_{j=1}^{k} a_j g_j = \sum_{j=1}^k 0 \cdot g_j = 0^n.$$
	Since $c'$ is a shift of $c$, $c$ must also equal $0^n$. 

\end{proof}

The next lemma provides a lower bound on the minimum weight of a constacyclic code and the required support for a vector which achieves it.
\begin{lemma}[Minimum weight of a constacyclic code]
	\label{lem:min_weight}
	Any constacyclic code of length $n$ and dimension $k$ has minimum distance $\ge \frac{n}{k}$. Furthermore, if a code $c$ has weight $\frac{n}{k}$ and first nonzero position $p$, then $\supp(c) = \{p + z \cdot k, 0 \le z < \frac{n}{k}\}$. 
\end{lemma}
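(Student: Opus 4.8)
The strategy is to study the \emph{gap structure} of a codeword: for a nonzero $c \in C$, read the $n$ coordinate positions cyclically and look at the maximal runs of consecutive zero coordinates lying between cyclically consecutive nonzero coordinates. The key input is a cyclic strengthening of Lemma~\ref{lem:consecutive}, namely that a nonzero $c \in C$ cannot contain $k$ consecutive zeros even when a run is allowed to wrap around from the last coordinate to the first. This follows from Lemma~\ref{lem:consecutive}: applying a suitable power of the shift $\sh$ rotates such a wrap-around run to the front, and since $a \ne 0$ the map $\sh$ permutes the coordinate positions cyclically and preserves which coordinates are zero, so the rotated codeword has $k$ leading zeros and must be $0^n$, forcing $c = 0^n$.

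With this in hand, the minimum-distance bound is a pigeonhole argument. Let $c \in C$ be nonzero with $w = \wt(c)$. Its $w$ nonzero positions split the cyclic arrangement of $n$ positions into exactly $w$ gaps whose lengths sum to $n - w$. If $w < n/k$ then $wk < n$, hence $w(k-1) < n-w$, so not every gap can have length $\le k-1$; some gap has length $\ge k$, which is a cyclic run of $k$ zeros in a nonzero codeword, contradicting the previous paragraph. Therefore $w \ge n/k$, and since $C$ is linear its minimum distance equals the minimum weight of a nonzero codeword, so $\mindist(C) \ge n/k$.

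For the support characterization, assume $c$ has weight exactly $w = n/k$ (so in particular $k \mid n$) with first nonzero position $p$. Now the $w$ gap lengths sum to $n - w = (k-1)\tfrac{n}{k} = (k-1)w$, so they average exactly $k-1$; since each gap has length at most $k-1$ (again by the cyclic form of Lemma~\ref{lem:consecutive}), every gap has length exactly $k-1$, i.e. cyclically consecutive nonzero positions differ by exactly $k$. Since $c \ne 0^n$ there are at most $k-1$ zero coordinates preceding position $p$, so starting from $p$ and stepping by $k$ lists the nonzero positions as $p, p+k, \ldots, p+(w-1)k$; here $p+(w-1)k = p + n - k$ stays within the $n$ coordinates, so no reduction modulo $n$ is needed, giving $\supp(c) = \inset{ p + zk \suchthat 0 \le z < \tfrac{n}{k} }$. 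The only delicate point in the whole argument is the cyclic wrap-around in Lemma~\ref{lem:consecutive} — one must use that $a$ is a unit so that the shift genuinely lets us treat "$k$ consecutive zeros" cyclically; the remaining counting steps are routine.
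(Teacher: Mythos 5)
Your proposal is correct and follows essentially the same route as the paper: both rest on Lemma~\ref{lem:consecutive} to cap the run of zeros between successive nonzero coordinates at $k-1$ and then count, forcing every nonzero codeword to have weight at least $\frac{n}{k}$ and, in the equality case, support $\inset{p + zk \suchthat 0 \le z < \frac{n}{k}}$. The only cosmetic difference is that you count the gaps cyclically (after observing that the shift operator makes the no-$k$-consecutive-zeros condition wrap around), whereas the paper first shifts the codeword to begin with a nonzero coordinate and then argues it would otherwise end in $k$ consecutive zeros.
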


\begin{proof}
	By linearity of $C$, the code's minimum distance is the same as the minimum weight of a nonzero codeword. 
	Suppose towards contradiction that $c \neq 0^n \in C$ and $d = \wt(c) < \frac{n}{k}$. 
	Let $i$ be the first nonzero index of $c$ and let $c' = \shi{n-i+1} c$. Hence $c'$ starts with a nonzero index. Let $i_1 < i_2 < \ldots < i_d$ denote the indices of the nonzero coordinates of $c'$. Since $c'$ can never contain more than $k-1$ consecutive zeroes by Lemma \ref{lem:consecutive}, the largest position each nonzero coordinate can take is exactly $k$ positions more than the previous nonzero coordinate. Thus, $i_j \le 1 + (j-1)k <  n+1-k$. Hence, $c'$ ends in at least $k$ consecutive zeroes, and so $c' = 0^n=c$ by Lemma \ref{lem:consecutive}. Therefore, the original assumption is false; no $c \neq 0^n \in C$ can have  $\wt(c)< \frac{n}{k}$.
	
	In the case that $d = \wt(c) = \frac{n}{k}$, let $c', i_1,\ldots,i_d$ be defined similarly. Hence, $i_d \le 1 + (d-1)k \le  n-k+1$, where equality occurs if and only if each nonzero index occurs exactly $k$ positions after the previous nonzero index. As before, if $i_d < n-k+1$ there is a contradiction, thus $i_d$ must equal $n-k+1$. Hence, $\supp(c') = \{1 + z \cdot k, 0 \le z < \frac{n}{k}\}$. A simple shift back from $c'$ to $c$ completes the proof. 
\end{proof}

In the lemma below, we determine a sufficient condition for powers of a constacyclic code to be constacyclic. We will later see in Remark \ref{rem:non_chain} that a constacyclic code raised to an arbitrary power is not always constacyclic. 

\begin{lemma}[Schur product of a constacyclic code]
	\label{lem:chain}
	Let $C$ be a constacyclic code of length $n$, dimension $k$ and generator $g(x)$ over modulus $x^n-a$ and field $\F$. Let $\ell$ denote the minimum natural number such that $a^\ell = 1$. 

	Then for any $z \in \mathbb{N}$, $\codepow{C}{z \cdot \ell +1}$ is a constacyclic code of length $n$ over modulus $x^n-a$.

\end{lemma}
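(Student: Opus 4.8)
The plan is to invoke Lemma~\ref{lem:close_shift}: since $\codepow{C}{z\ell+1}$ is automatically a linear subspace of $\F^n$, it suffices to show that it is closed under the shift map $\sh$, and then it is a constacyclic code over modulus $x^n-a$. Write $m = z\ell+1$ for brevity. I would first record the standard fact that
$\codepow{C}{m} = \spn\inset{ v^{(1)} \schur \cdots \schur v^{(m)} \suchthat v^{(1)},\ldots,v^{(m)} \in C }$,
the span of all $m$-fold Schur products of codewords; this follows from the recursive definition of powers by a one-line induction on $m$ (writing each element of $\codepow{C}{m-1}$ as a linear combination of $(m-1)$-fold products and Schur-multiplying by a codeword). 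Because $\sh$ is linear, to prove $\codepow{C}{m}$ is closed under $\sh$ it is then enough to show $\sh$ sends each generator $v^{(1)} \schur \cdots \schur v^{(m)}$ (with all $v^{(t)} \in C$) back into $\codepow{C}{m}$.

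The key computation is to compare $\sh$ applied to a Schur product with the Schur product of the shifted vectors. Using the coordinate description $\sh(c_0,\ldots,c_{n-1}) = (a c_{n-1}, c_0, \ldots, c_{n-2})$ from Lemma~\ref{lem:close_shift}, a direct coordinate-by-coordinate check shows that for any $v^{(1)},\ldots,v^{(m)} \in \F^n$, setting $w = v^{(1)} \schur \cdots \schur v^{(m)}$, the vectors $\sh v^{(1)} \schur \cdots \schur \sh v^{(m)}$ and $\sh w$ agree in every coordinate $j \ge 1$ (both equal $w_{j-1}$), while in coordinate $0$ they equal $a^m w_{n-1}$ and $a w_{n-1}$ respectively. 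Hence the difference
\[
	\sh v^{(1)} \schur \cdots \schur \sh v^{(m)} \; - \; \sh\inparen{ v^{(1)} \schur \cdots \schur v^{(m)} }
\]
is supported on coordinate $0$ alone, with value $a\inparen{a^{m-1}-1}\, w_{n-1}$.

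Now I would substitute $m = z\ell+1$, so that $a^{m-1} = a^{z\ell} = \inparen{a^\ell}^z = 1$ by definition of $\ell$, and the displayed difference vanishes. Therefore $\sh\inparen{ v^{(1)} \schur \cdots \schur v^{(m)} } = \sh v^{(1)} \schur \cdots \schur \sh v^{(m)}$. Since $C$ is constacyclic, each $\sh v^{(t)}$ lies in $C$, so the right-hand side is again an $m$-fold Schur product of codewords and hence lies in $\codepow{C}{m}$; by linearity of $\sh$ this conclusion extends from the generators to all of $\codepow{C}{m}$. Thus $\codepow{C}{z\ell+1}$ is closed under $\sh$, and Lemma~\ref{lem:close_shift} gives that it is a constacyclic code over $x^n-a$. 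There is no deep obstacle in this argument; the only point requiring care is the index bookkeeping in the key computation, namely tracking the factor of $a$ picked up by the wraparound coordinate each time $\sh$ is applied so that $m$ shifts contribute exactly $a^m$, which collapses back to $a$ precisely when $m \equiv 1 \pmod{\ell}$ — which is also why one should not expect arbitrary powers to be constacyclic, cf.\ Remark~\ref{rem:non_chain}.
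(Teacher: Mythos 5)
Your proposal is correct and follows essentially the same route as the paper: reduce via linearity of $\sh$ to the spanning set of $(z\ell+1)$-fold Schur products, and observe that $\sh$ applied to such a product equals the product of the shifted codewords because the wraparound coordinate picks up $a^{z\ell+1}=a$. Your bookkeeping of the coordinate-$0$ discrepancy $a(a^{m-1}-1)w_{n-1}$ is just a slightly more explicit rendering of the paper's computation, so there is nothing further to add.
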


\begin{proof}
	By definition $\codepow{C}{t}$ is a linear code, for any $t \ge 0$, so
	to show that $\codepow{C}{z\cdot \ell + 1}$ corresponds to a constacyclic code, it suffices to show that $\codepow{C}{z \cdot \ell + 1}$ is closed under 
	the shift operator, $\sh$.
	Since $\codepow{C}{z \cdot \ell + 1}$ is spanned by vectors of the form $\prod_{i=1}^{z \cdot \ell +1} \vv{c}_i$, and the shift operator, $\sh$ is linear, 
	it suffices to show that $\sh \inparen{ \prod_{i=1}^{z \cdot \ell +1} \vv{c}_i } \in \codepow{C}{z \cdot \ell +1}$ whenever $\inset{ \vv{c}_i } \subset C$.
	Let $\vv{c}_i = (c_{1,i},\ldots,c_{n,i} )$, then 
	\begin{align*}
		\sh \inparen{ \prod_{i=1}^{z \cdot \ell+1} \vv{c}_i }  &= \sh \inparen{ \prod_{i=1}^{z \cdot \ell +1} c_{1,i}, \ldots, \prod_{i=1}^{z \cdot \ell +1} c_{n,i} } \\
												&= \inparen{ a \prod_{i=1}^{z \cdot \ell +1} c_{n,i}, \prod_{i=1}^{z \cdot \ell +1} c_{1,i}, \ldots, \prod_{i=1}^{z \cdot \ell +1} c_{n-1,i} } \\
												&= \inparen{ a^{z \cdot \ell +1} \prod_{i=1}^{z \cdot \ell +1} c_{n,i}, \prod_{i=1}^{z \cdot \ell +1} c_{1,i}, \ldots, \prod_{i=1}^{z \cdot \ell +1} c_{n-1,i} } \\
												&= \prod_{i=1}^{z \cdot \ell +1} \sh \vv{c}_i 
	\end{align*}
	Since $C$ is a constacyclic code, each $\sh \vv{c}_i \in C$, and so $\prod_{i=1}^{z \cdot \ell + 1} \sh \vv{c}_i \in \codepow{C}{z \cdot \ell + 1}$.

\end{proof}

\begin{remark}
	\label{rem:non_chain}
	It is not hard to find constacyclic codes that are not closed under the Schur-product operation, 
	\ie constacyclic codes $C$, such that $\codepow{C}{d}$ is not constacyclic.
	Note that in these cases, by Lemma \ref{lem:chain} it must be that $d \neq 1 \mod \ell$.
	Concretely, if $C$ is the constacyclic code generated by $g(x) = x^3+4$ over $\F_7/(x^6+2)$, then $\codepow{C}{2}$ is not constacyclic.
\end{remark}

As a result of Lemma \ref{lem:chain} and Remark \ref{rem:non_chain}, the following modifications to Definitions \ref{def:Hilbert} and \ref{def:CMR} are needed to reflect that only certain powers of constacyclic codes are constacyclic:

\begin{defn}[Constacyclic Castelnuovo-Mumford regularity]

Let $C$ be a constacyclic code generated by some $g(x)$ dividing modulus $f(x) = x^n-a$ over $\F$ with $a^{\ell} = 1$. Then the \textit{Constacyclic Castelnuovo-Mumford regularity}, $r'(C) = z$, is the unique $z \in \mathbb{Z}$ such that $z \ell + 1 \ge r(C) > (z-1) \ell + 1$.
\end{defn} 

\begin{defn}[Constacyclic Hilbert sequence]
Let $C \subseteq \F^n$ be a constacyclic code generated by some $g(x)$ dividing modulus $f(x) = x^n-a$ over $\F$ with $a^{\ell} = 1$. The \textit{constacyclic Hilbert sequence} of $C$ is defined as $\dim(C^{\langle i \ell + 1 \rangle}), i \ge 0$.
\end{defn}

In Lemma \ref{lem:generator}, we note that the constacyclic Hilbert sequence of a code can be computed efficiently, given only a basis for the code.

\begin{lemma}[Generator computation]
Given any constacyclic code $C$ where $\dim(C) = k$ and given at least $k$ linearly independent code words $c_1, \ldots, c_j$ for $j \ge k$, it is possible to determine $g(x)$, the generator of $C$ in $O(k^2 n)$ operations. Furthermore, given generator $g(x)$ of code $C$, it is possible to determine generator $q(x)$ of $\codepow{C}{z\ell+1}$ in $O(n^4 log(z \cdot \ell))$ operations.

\label{lem:generator}
\end{lemma}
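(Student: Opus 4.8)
The plan is to split the lemma into its two algorithmic claims and bound each separately. For the first claim, suppose we are given $j \ge k$ linearly independent codewords $c_1,\ldots,c_j$ of $C$. Since the generator polynomial $g(x)$ of a constacyclic code is, by Lemma \ref{lem:basis}, the unique (up to scalar) nonzero codeword of minimal degree, and the standard-form generator matrix $G'$ has $g_k = \coeff(x^{k-1} g(x))$ (cf.\ the definition of $G'$), it suffices to compute a row-echelon form of the $j \times n$ matrix whose rows are the $c_i$. First I would Gaussian-eliminate this matrix: this costs $O(j \cdot n)$ per pivot step and there are at most $k$ pivots, so $O(jkn)$ operations; but since any $k+1$ of the $c_i$ are linearly dependent, we may discard redundant rows as they appear and keep the working matrix at $k$ rows, giving $O(k^2 n)$. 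The bottom nonzero row of the reduced matrix is a scalar multiple of $\coeff(x^{k-1} g(x))$ (it is the unique codeword supported on the last $n-k+1$ coordinates with a leading $1$ in position $k$, by Lemma \ref{lem:consecutive}), and reading off its entries recovers the coefficients of $g(x)$ in $O(n)$ time. Hence $g(x)$ is obtained in $O(k^2 n)$ operations.

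For the second claim, I would compute $q(x)$, the generator of $\codepow{C}{z\ell+1}$, by repeated squaring in the ``exponent'' $z\ell+1$ combined with the Schur product. By Lemma \ref{lem:chain}, $\codepow{C}{z\ell+1}$ is constacyclic, so it has a well-defined generator polynomial, and by Lemma \ref{lem:basis} a basis for it is obtained from its generator and conversely its generator is recovered from any spanning set via the first part of this lemma in $O(k^2 n) = O(n^3)$ time. The key subroutine is: given bases (equivalently, generators) for two constacyclic codes $A, B \subseteq \F^n$ whose Schur product $A \schur B$ is again constacyclic, compute a generator for $\spn\{a \schur b : a \in A, b \in B\}$; one forms the $O(n^2)$ pairwise Schur products of basis vectors (each an $O(n)$ operation, so $O(n^3)$ total), then extracts the generator of the span in $O(n^3)$. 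Using the binary expansion of $z\ell+1$, one maintains a ``running power'' code and repeatedly squares and multiplies; there are $O(\log(z\ell))$ such steps, each costing $O(n^3)$, for a total of $O(n^3 \log(z\ell))$. To reach the stated $O(n^4 \log(z\ell))$ bound it then suffices to be generous with the per-step cost (e.g.\ working with the full $k \times n$ generator matrices and $O(k^2) = O(n^2)$ Schur products, each reduced in $O(kn)$, plus an $O(n^3)$ re-extraction), which comfortably fits inside $O(n^4)$ per step.

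One subtlety to address carefully: during the repeated-squaring recursion the intermediate exponents need not be congruent to $1 \bmod \ell$, so the intermediate codes $\codepow{C}{m}$ need not be constacyclic (Remark \ref{rem:non_chain}), and the ``generator'' shortcut fails for them. I would handle this by keeping each intermediate power as an explicit spanning set / basis of at most $n$ vectors in $\F^n$ rather than as a generator polynomial, performing all squarings and products at the level of these vector sets (each step: $O(n^2)$ Schur products of basis pairs at $O(n)$ each, then Gaussian elimination to a basis in $O(n^3)$), and only converting the final code $\codepow{C}{z\ell+1}$ — which \emph{is} constacyclic — to its generator $q(x)$ via the first part of the lemma at the end. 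This keeps every step at cost $O(n^3) \subseteq O(n^4)$ and avoids relying on constacyclicity of intermediates. The main obstacle is purely bookkeeping: verifying that the Schur-product span is computed correctly from bases (i.e.\ that $\spn(A) \schur \spn(B) = \spn\{a_i \schur b_j\}$ over basis elements, which is immediate from bilinearity of $\schur$) and that the number of repeated-squaring iterations is $O(\log(z\ell))$ with each iteration's cost within the claimed budget; there is no hard mathematical content beyond the structural facts already established in Lemmas \ref{lem:basis}, \ref{lem:consecutive}, and \ref{lem:chain}.
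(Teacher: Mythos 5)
Your proposal is correct and follows essentially the same route as the paper: Gaussian elimination on the given codewords to read off $\coeff(x^{k-1}g(x))$ from the last nonzero row in $O(k^2n)$, then repeated squaring with intermediate powers stored as explicit bases (not generators, since intermediates need not be constacyclic) and each step formed from the $O(n^2)$ pairwise Schur products followed by row reduction. The only quibble is your claimed $O(n^3)$ per-step cost for reducing the $n^2\times n$ matrix of products, which is optimistic (the paper charges $O(n^4)$ for exactly this step), but since you only need $O(n^4)$ per step the stated bound $O(n^4\log(z\ell))$ is still established.
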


\begin{proof}
We first determine how to obtain $g(x)$ given $c_1,\ldots, c_k$. We begin with forming $G'$ by applying Gaussian Elimination to the matrix with rows $c_1, \ldots, c_k.$ The $k$th row of $G'$, $g_k$ will be the final nonzero row and will be $\coeff (x^{k-1} g(x))$, thus completely determining $g(x)$. Gaussian Elimination of a $k \times n$ matrix of rank $k$ can be completed in $O(k^2 n)$ operations. 

Given a basis of $C$, it is possible to determine a basis of $\codepow{C}{z\ell+1}$ in $O(n^4 log(z \ell))$ operations. We first use the method of repeated squaring to determine the bases for $C,\codepow{C}{2}, \codepow{C}{4},\ldots, \codepow{C}{b}$ where $b = 2^{\lfloor log_2(z\ell+1) \rfloor}$. Then there is some $I \subseteq \{1,2,4,\ldots,b\}$ such that $\Pi_{i \in I} \codepow{C}{i} = \codepow{C}{z\ell+1}$. Since $|I| = O(log(z\ell+1))$, it suffices to show that given bases for $\codepow{C}{a},\codepow{C}{a'}$ it is possible to compute a basis for $\codepow{C}{a+a'}$ in $O(n^4)$ operations. 

Let $A,A'$ be the matrices representing the bases for $\codepow{C}{a}$ and $\codepow{C}{a'}$ respectively. Since $A$ and $A'$ generate codes in $\F^n$, the maximum rank these codes can have is $n$, thus $A$ and $A'$ have at most $n$ rows (and they both have exactly $n$ columns).
Then $\codepow{C}{a+a'}$ is the span of the Schur product of rows of $A$ with rows of $A'$. There are $n^2$ such rows, each of length $n$. Thus, a matrix $M$ of all such rows can be computed in $O(n^3)$ operations. Computing a basis for $\codepow{C}{a+a'}$ simply involves determining the reduced row echelon form of $M$. When applying Gaussian Elimination to $M$, for each pivot column it is possible to identify a row nonzero in that column in $O(n^2)$ operations. Then row reduction using the chosen row can be done in $O(n^3)$ operations. Since there are $O(n)$ pivot columns, determining the first $n$ rows of the reduced row echelon matrix can be done in $O(n^4)$ time. Finally, since the rank is at most $n$, we can simply ignore the remaining $n^2-n$ rows.

\end{proof}

\section{Extensions to quasi-twisted codes}

\begin{definition}[Quasi-twisted codes]
	Let $m, t \ge 1$, and set $n = mt \in \Z$.
	A code $C \subset \F^n$ is called ($t,a)$-\emph{quasi-twisted}, if there is an $t \ge 1$, and $a \in \F$, such that
	$C$ is closed under the map
	\begin{align*}
		T_a^\ell : \F^n &\rightarrow \F^n \\
			(c_0,\ldots,c_{n-1}) &\mapsto (a c_{n-t}, \ldots, a c_{n-1}, c_0, c_1, \ldots, c_{n-t-1} )
	\end{align*}
\end{definition}

Under this definition, constacyclic codes are $(1,a)$-quasi-twisted.
Quasi-twisted codes have been well-studied, \cite{LS01,Aydin2001,Radkova,L07,Aydin2007,Jia2012},
and have even been further generalized to \emph{multi-twisted} codes \cite{Aydin2017}
by allowing $t$ different constants in the shift operation.

It is not hard to check that for any non-trivial quasi-twisted code, it mus be that $t \divides n$, 
thus we define $m \defined n/t$.
With this definition, quasi-twisted codes correspond to $\F[x]/(x^m-a)$ submodules of $\inparen{ \F[x]/(x^m - a) }^t$.

Quasi-twisted codes are closely related to constacyclic codes, and if we permute the indices of a quasi-twisted code
\[
	(c_0,\ldots,c_{n-1}) \mapsto ((c_0,c_t,\ldots,c_{(m-1)t)},(c_1,c_{t+1},\ldots,c_{(m-1)t+1)},\ldots,(c_{t-1},c_{2t-1},\ldots,c_{(m-1)t+(t-1)}))
\]
then each block of $m$ digits forms a constacyclic code.  This motivates the following definition of a projection

\begin{align*}
	P_i : \F^n &\rightarrow \F^{m} \\
		(c_0,\ldots,c_n) &\mapsto (c_i,c_{t+i},c_{2t+i},\ldots,c_{(m-1)t + i} )
\end{align*}

Then if $C$ is a quasi-twisted code, $P_i(C)$ is a constacyclic code for $i = 0,\ldots,t-1$.
Since addition and scalar multiplication commute with these project operations, 
the Schur product also commutes with the projection.
The shift operation is also compatible with the projection, in particular
\[
	P_i \inparen{ \shi{t} c } = \sh\inparen{ P_i (c) }
\]
Thus, through an application of Lemma \ref{lem:close_shift}, one can inductively show that 
\[
	P_i\inparen{ \codepow{C}{z \ell + 1} } = \codepow{P_i(C)}{z\ell+1}
\]

We can therefore apply our results to each 
of the projected codes $P_i(\codepow{C}{z\ell+1})$ individually, 
to analyze the structure of $\codepow{C}{z \ell+1}$ when $C$ is a quasi-twisted code.

\section{Results}
\label{sec:results}
\subsection{Overview}
This section will introduce the novel results of this paper. The results use a specific component of a constacyclic code to determine key high level properties of the code's constacyclic Hilbert sequence. We denote this component the \textit{pattern polynomial} and define it as follows: 

\begin{defn}
\label{def:pattern_polynomial}

	Let $C \subset \F^n$ be a constacyclic code with modulus $f(x) = x^n - a$, generator $g(x) | f(x)$ with $g(0) = 1$.
	Let $p(x)$ be the highest degree polynomial, with degree $n-v$, $v | n$, and $p(0) = 1$, such that $ p(x) | g(x)$ and $\inset{\coeff{ (x^i p(x))} \suchthat 0 \le i < v }$ have disjoint support. 
	Then $p(x)$ is defined to be the \textit{pattern polynomial} of $g(x)$. 

\end{defn}

We observe that $p(x)$ always exists, as $p(x) = 1$ satisfies all requirements of a pattern polynomial except for when a higher degree pattern polynomial exists.

In Section \ref{sec:CHSPatPoly} we present Theorem \ref{thm:pattern}, in which we use the pattern polynomial to qualify the structure of a code's constacyclic Hilbert sequence. Specifically, we show the pattern polynomials of a code's constacyclic Hilbert sequence are simply the corresponding powers of the underlying code's pattern polynomial. Since the pattern polynomial divides each element of the code, one can deduce important properties of the codewords. For example, the support of every element of the code is closely related to the support of the pattern polynomial. Moreover, all vectors in the code, including the minimum weight vector, have weight which is divisible by the weight of the pattern polynomial. Furthermore, through use of Theorem \ref{thm:pattern_past_result} and Lemma \ref{lem:disj_supp_form}, the structure of the pattern polynomial can be further specified. Consequently, knowledge of the pattern polynomial uniquely determines a pair $(v,d)$ where $v |n $,$d$ is a unit, $d^{-\frac{n}{v}} = a$, and every codeword $c  = (c_1,\ldots, c_n)$ has the property $c_j = dc_{j-v}$ for all $j >v$. 

Later, in Section \ref{sec:CHSmaxdimgen} we use Theorem \ref{thm:pattern_past_result} to show that the generators of powers of a code in its constacyclic Hilbert sequence are simply the pattern polynomial raised to the corresponding powers for all terms at or after the constacyclic Castelnuovo-Mumford regularity. Given that a generator completely determines a code, the pattern polynomial for a base code is sufficient for understanding all such powers of the code in its constacyclic Hilbert sequence. Moreover, the structure of the pattern polynomial induces several additional key properties when it generates a code, including: (1) that the generator has minimum weight, (2) the minimum weight vector has weight $\frac{n}{k}$ where $k$ is the dimension of the code, and (3) the basis of shifts of the generator polynomial has disjoint support. These properties are explored further in Section \ref{sec:CHSmaxdimgen} through Lemma \ref{lem:disj_supp_form}. 

Next, in Section \ref{sec:dimInv}, we introduce Theorem \ref{thm:main_gen} which qualifies when the powers of a code are invariant under the Schur product for all terms in the constacyclic Hilbert sequence occurring at or after the constacyclic Castelnuovo-Mumford regularity. Such codes form perhaps the most intrinsically interesting class of codes studied in this work. In addition to the other properties described with Theorem \ref{thm:pattern} and Theorem \ref{thm:pattern_past_result}, Theorem \ref{thm:main_gen} ensures that in the $(v,d)$ pair described above also exhibits the property that $d^{\ell} = 1$.

Furthermore, in Section \ref{sec:idpat} we include Theorem \ref{thm:acquire_pat} where we describe an efficient algorithm to determine the pattern polynomial for a code given its generator. This is of practical importance, since, as previously described, the properties of the pattern polynomial are helpful in understanding a constacyclic code's behavior under the Schur product. Thus, in order to understand such properties of a given code, it suffices to compute the pattern polynomial.  

Finally, in Section \ref{sec:misc}, we provide a few additional results.

\subsection{Pattern polynomials of a constacyclic Hilbert sequence }
\label{sec:CHSPatPoly}
In Theorem \ref{thm:pattern} we show that the pattern polynomials for a constacyclic Hilbert sequence are given by powers of the pattern polynomial of the base constacyclic code. Consequently, identifying the pattern polynomial of the base constacyclic code is sufficient to identify all pattern polynomials in the sequence. 

\begin{theorem}[Pattern polynomials of a constacyclic Hilbert sequence ]
	Let $C \subset \F^n$ be a constacyclic code with modulus $f(x) = x^n - a$, dimension $k$, generator $g_0(x)$, and pattern polynomial $p_0(x)$. Consider the of the constacyclic Hilbert sequence $C, \ldots$, then for each $\codepow{C}{i \ell+1}$ the generator $g_i(x)$ has pattern polynomial $p_i(x)= \codepow{p_0(x)}{ i \ell + 1  }$.
	 \label{thm:pattern}	
\end{theorem}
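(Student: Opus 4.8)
My plan is to reduce the theorem to a statement about a single Schur power by exploiting the multiplicative structure already recorded in Lemma~\ref{lem:chain}, and then to argue that the pattern polynomial is exactly the ``disjoint-support part'' of the generator, which behaves well under powering. Concretely, I would first fix notation: write $g_0(x) = p_0(x) q_0(x)$, where $p_0$ is the pattern polynomial of degree $n-v$ with $v \mid n$ and $p_0(0)=1$, and $q_0$ is whatever is left over. The shifts $\{\coeff(x^i p_0(x)) : 0 \le i < v\}$ have disjoint support, so by Lemma~\ref{lem:disj_supp_form}/Theorem~\ref{thm:pattern_past_result} (which I am allowed to assume), $p_0$ has the rigid form $c_j = d c_{j-v}$ with $d^{-n/v}=a$; in particular $p_0$ is determined by a pair $(v,d)$. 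The key structural observation is that the constacyclic code generated by $p_0$ alone has dimension $v$, is generated by a disjoint-support basis, and hence (by Lemma~\ref{R15-236}) is invariant under the Schur product; more precisely $\codepow{p_0}{m}$ is again (up to the scalar normalization $p(0)=1$) a polynomial of the same shape with parameters $(v, d^m)$, and $d^{-mn/v} = a^m$.

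Next I would establish the ``$p_i$ divides $g_i$ and gives disjoint supports'' half of the claim. Since $g_0 = p_0 q_0$, every codeword of $C$ is a multiple of $p_0$, so componentwise every codeword lies in the coordinate pattern dictated by $p_0$ (entries in position classes mod $v$ are tied together by the $d$-relation). Taking Schur products of $i\ell+1$ such codewords, each coordinate gets multiplied, and the $(v,d)$-relation becomes a $(v, d^{i\ell+1})$-relation — here the exponent $i\ell+1$ (rather than an arbitrary power) is exactly what makes $d^{-(i\ell+1)n/v} = a^{i\ell+1} = a$, so that $\codepow{C}{i\ell+1}$ is genuinely a constacyclic code over the same modulus $x^n-a$, consistent with Lemma~\ref{lem:chain}. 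This shows $\codepow{p_0}{i\ell+1}$ (normalized so its constant term is $1$) divides $g_i$ and that its $v$ shifts have disjoint support, so by maximality in Definition~\ref{def:pattern_polynomial}, $p_i(x)$ has degree at least $n-v$; equivalently the degree parameter $v_i$ of $p_i$ satisfies $v_i \mid v$.

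The reverse inequality — that $p_i$ is not \emph{larger} than $\codepow{p_0}{i\ell+1}$ — is the main obstacle, and I would attack it as follows. Suppose $p_i$ has parameter $v_i < v$. Then every codeword of $\codepow{C}{i\ell+1}$ satisfies the stronger relation $c_j = d_i c_{j-v_i}$, so $\codepow{C}{i\ell+1}$ is contained in the ``thin'' constacyclic code of dimension $v_i$. I want to contradict this by producing a codeword of $\codepow{C}{i\ell+1}$ that does \emph{not} satisfy any relation with period smaller than $v$. The natural candidate is $\codepow{c_0}{i\ell+1}$ where $c_0 = \coeff(g_0)$ itself: since $q_0$ is, by the maximality in the definition of $p_0$, genuinely ``not of disjoint-support-pattern type,'' its coordinates do not satisfy the $v_i$-periodicity relation, and raising to the power $i\ell+1$ in a field cannot create new linear relations among coordinates that were algebraically independent of that pattern. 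Making this precise is the delicate point: I would argue that if $\codepow{c_0}{i\ell+1}$ satisfied the $(v_i,d_i)$-relation, then pulling back through the $p_0$-factorization would force $q_0$ (or a shift of it) to already be of disjoint-support pattern type with period dividing $v/v_i < v$, contradicting that $p_0$ was chosen of maximal degree. Once both inequalities $v_i \mid v$ and $v_i = v$ are in hand, the parameters match, $d_i = d^{i\ell+1}$ is forced by $d_i^{-n/v} = a$ together with the explicit computation of the tied coordinates of $\codepow{c_0}{i\ell+1}$, and hence $p_i(x) = \codepow{p_0(x)}{i\ell+1}$ as claimed. I would present the $i=1$ (single $\ell$-step) case first and then note that iterating it — or directly repeating the argument — handles general $i$, since $(\codepow{C}{\ell+1})$ is again constacyclic with pattern polynomial $\codepow{p_0}{\ell+1}$ and one can induct.
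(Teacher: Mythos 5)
Your first half is sound and essentially reproduces Lemma~\ref{lem:patDiv}: since $p_0 \mid g_0 \mid x^n-a$ and $p_0$ is rigid (Lemma~\ref{lem:disj_supp_form}), every codeword of $C$ satisfies the $(v,d)$-relation, Schur products of $i\ell+1$ codewords satisfy the $(v,d^{i\ell+1})$-relation, and $d^{-(i\ell+1)n/v}=a^{i\ell+1}=a$ makes $\codepow{p_0(x)}{i\ell+1}$ a legitimate pattern-type divisor of $g_i(x)$, so $\deg p_i \ge n-v$. (One caution: do not cite Theorem~\ref{thm:pattern_past_result} here --- its proof in the paper invokes Theorem~\ref{thm:pattern}, so that would be circular; Lemma~\ref{lem:disj_supp_form} alone gives you the rigid form of $p_0$.)

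The genuine gap is the maximality half. Your plan is to assume $v_i<v$, note that then the specific codeword $\codepow{\coeff(g_0)}{i\ell+1}$ satisfies the finer $(v_i,d_i)$-relation, and ``pull back'' to conclude that $g_0$ itself already had a finer pattern, contradicting maximality of $p_0$. The pull-back is exactly the step you have not proved, and the heuristic you offer --- that raising coordinates to the power $i\ell+1$ ``cannot create new linear relations'' --- is false over a finite field: the power map $t\mapsto t^{m}$ is not injective (e.g.\ $2^4=3^4$ in $\F_5$), so from $c_{j+v_i}^{m}=d_i c_j^{m}$ you only get $c_{j+v_i}=\epsilon_j c_j$ with $\epsilon_j^{m}=d_i$, where $\epsilon_j$ may vary with $j$; no single ratio, hence no finer pattern divisor of $g_0$, need follow. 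Worse, the lemma you would need (``$\codepow{\coeff(g_0)}{i\ell+1}$ never satisfies a relation with period $<v$'') is not implied by the theorem itself: a single distinguished codeword of $\codepow{C}{i\ell+1}$ can in principle be finer-structured even when the code's pattern is exactly $\codepow{p_0}{i\ell+1}$, so a contradiction cannot be extracted from that one witness without substantial extra input. The paper closes this gap by a different mechanism: it boosts to the power $(i\ell+1)(r'(C)\ell+1)$, where Lemma~\ref{R15-236} and Lemma~\ref{lem:disj_supp_form} force the rigid disjoint-support structure, uses Lemma~\ref{lem:patAch} to place $\codepow{p_0(x)}{(i\ell+1)(r'(C)\ell+1)}$ inside that code, applies Lemma~\ref{lem:patDiv} to the intermediate code to get divisibility by $\codepow{p'(x)}{r'(C)\ell+1}$, and then derives a contradiction by counting supports ($n/v$ versus a multiple of $n/v'$). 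Nothing playing that role appears in your sketch, so as written the proof is incomplete at its crucial step. (Minor additional point: composing the $i=1$ case with itself yields exponents $(\ell+1)^j$, not $i\ell+1$, so you should run the argument directly for general $i$ as you parenthetically suggest.)
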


\begin{proof}

We proceed with a proof by induction that $p_i(x) = \codepow{p_0(x)}{ i \ell + 1  }$.  In the base case, it is clear that $p_0(x) = \codepow{p_0(x)}{1}$ by definition. In the inductive hypothesis, we assume that for $0 \le i < \zeta$ that $p_{i}(x) = \codepow{p_{0}(x)}{ i \ell + 1  }$. In the inductive step $i = \zeta$, and we will show that $C^{\langle i \ell + 1 \rangle }$ has pattern polynomial $p_i(x) = \codepow{p_0(x)}{ i \ell +1  }$.

By Lemma \ref{lem:patDiv} (below) we have that $\codepow{p_0(x)}{i \ell+1} | g_i(x)$. 
Thus all that remains is to show that there is no higher degree polynomial $p'(x)$ that satisfies all other requirements for being a pattern polynomial of $g_i(x)$. 
Suppose towards contradiction that such a $p'(x)$ exists with $\deg(p'(x)) = n-v' > n-v = \deg(p_0(x))$. By Lemma \ref{lem:patAch}, we know that 
$$\codepow{p_0(x)}{(i \ell +1)(r'(C) \ell+1)} \in \codepow{C}{(i \ell+1)(r'(C)\ell+1)}.$$
Then by Lemma \ref{lem:patDiv} we also know that 
$$\codepow{p'(x)}{r'(C) \ell + 1} | \codepow{p_0(x)}{(i \ell +1)(r'(C) \ell+1)},$$ hence,
$$\codepow{p_0(x)}{(i \ell +1)(r'(C) \ell+1)} = \sum_{j=0}^{v'-1} c_j x^j \codepow{p'(x)}{r'(C) \ell + 1}.$$
This is a contradiction since 
$$\left|\supp\inparen{\codepow{p_0(x)}{(i \ell +1)(r'(C) \ell+1)}}\right| = \frac{n}{v}$$ while for some integer $z$ 
$$\left|\supp\inparen{\sum_{j=0}^{v'-1} c_j x^j \codepow{p'(x)}{r'(C) \ell + 1}}\right| = z\frac{n}{v'}.$$
We conclude that no such higher degree pattern polynomial $p'(x)$ exists. Therefore, $\codepow{p_0(x)}{i \ell+1}$ is the pattern polynomial for $\codepow{C}{i\ell+1}$.

\end{proof}

In Lemma \ref{lem:patDiv}, we show a relationship between the pattern polynomial of a base code and the minimum weight generator of any code in its constacyclic Hilbert sequence. This property is used in the proof of Theorem \ref{thm:pattern}.
\begin{lemma}
\label{lem:patDiv}
Let $C \subset \F^n$ be a constacyclic code with modulus $f(x) = x^n - a$, dimension $k$, generator $g_0(x)$, and pattern polynomial $p_0(x)$. Consider the constacyclic Hilbert sequence $C, \ldots$ with smallest degree generators $g_0(x), g_1(x), \ldots$. Then for each $i$, $p_0(x)^{i \ell + 1} | g_i(x)$ . 
\end{lemma}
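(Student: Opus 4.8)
The strategy is to prove the stronger statement that \emph{every} codeword of $\codepow{C}{m}$, where $m \defined i\ell+1$, is divisible (as a polynomial of degree $<n$) by $\codepow{p_0(x)}{m}$. Since $\codepow{C}{m}$ is constacyclic by Lemma~\ref{lem:chain} and $g_i(x)$ is one of its codewords (being its minimal-degree generator), this immediately yields $\codepow{p_0(x)}{m}\mid g_i(x)$. Moreover $\codepow{C}{m}$ is spanned by the $m$-fold Schur products $\vv{c}_1\schur\cdots\schur\vv{c}_m$ of codewords of $C$, and divisibility of a degree-$<n$ polynomial by a fixed polynomial is preserved under $\F$-linear combinations, so it suffices to treat a single such Schur product.

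The first real step is to pin down the shape of $p_0(x)$. The disjoint-support hypothesis forces no two elements of $\supp(p_0(x))$ to differ by an integer in $\{1,\dots,v-1\}$, so consecutive support positions are $\ge v$ apart; together with $0,n-v\in\supp(p_0(x))$ this gives $\wt(p_0(x))\le n/v$. Since $p_0(x)$ is a nonzero codeword of the dimension-$v$ constacyclic code it generates, Lemma~\ref{lem:min_weight} gives $\wt(p_0(x))\ge n/v$, so $\supp(p_0(x))=\{0,v,2v,\dots,n-v\}$. Next, $x^vp_0(x)\bmod (x^n-a)$ lies in that same code and is supported on the residue class $0 \bmod v$; using the disjoint-support basis $\{\coeff(x^ip_0(x))\}_{0\le i<v}$ from Lemma~\ref{lem:basis}, the only such codewords are scalar multiples of $p_0(x)$, and matching coefficients in $x^vp_0(x)\equiv\lambda\, p_0(x)$ yields $p_0(x)=\sum_{j=0}^{n/v-1}d^jx^{jv}$ for a unit $d\in\F$ with $d^{-n/v}=a$. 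Hence $\codepow{p_0(x)}{m}=\sum_{j=0}^{n/v-1}d^{jm}x^{jv}$, and since $a^m=a(a^\ell)^i=a$ this polynomial still divides $x^n-a$ --- which is exactly what makes $\codepow{C}{m}$ constacyclic and $g_i(x)$ well defined. (If the structure of disjoint-support generators is recorded elsewhere in the paper, that result may be cited in place of this step.)

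With $p_0(x)$ in this normal form, the crux is an elementary recurrence characterization: for a unit $e\in\F$, a vector $(c_0,\dots,c_{n-1})$ is the coefficient vector of a degree-$<n$ multiple of $\sum_{j=0}^{n/v-1}e^jx^{jv}$ if and only if $c_t=e\,c_{t-v}$ for every $t\in\{v,\dots,n-1\}$; in the forward direction the polynomial equals $\bigl(\sum_{i<v}c_ix^i\bigr)\cdot\sum_{j}e^jx^{jv}$, and in the backward direction the cofactor must have degree $<v$, whereupon matching coefficients forces the recurrence. Applying this with $e=d$, every $\vv{c}\in C$, being a multiple of $p_0(x)\mid g_0(x)$, has coordinates satisfying $c_t=d\,c_{t-v}$. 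Multiplying these relations over the $m$ factors, the $t$-th coordinate of $\vv{c}_1\schur\cdots\schur\vv{c}_m$ equals $d^m$ times its $(t-v)$-th coordinate for every $t\ge v$, so by the characterization with $e=d^m$ this Schur product is divisible by $\codepow{p_0(x)}{m}$. This is precisely the claim, completing the proof.

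I expect the middle step to be the main obstacle: passing from the abstract ``disjoint support of shifts'' hypothesis to the explicit normal form $p_0(x)=\sum_j d^jx^{jv}$ (both the first and third steps are short once this is in hand). The only other point demanding care is notational bookkeeping --- keeping the Schur power $\codepow{\,\cdot\,}{m}$ (coordinatewise $m$-th powers of the coefficients) separate from an ordinary polynomial power, and observing that the congruence $m\equiv 1\pmod\ell$ is used precisely to keep $\codepow{p_0(x)}{m}$ a divisor of $x^n-a$ (and hence, via Lemma~\ref{lem:chain}, to make $g_i(x)$ meaningful).
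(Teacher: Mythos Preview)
Your proof is correct. The overall strategy matches the paper's: reduce to showing that every generating Schur product $\vv{c}_1 \schur \cdots \schur \vv{c}_m$ (with $m=i\ell+1$) of codewords of $C$ is divisible, as a degree-$<n$ polynomial, by $\codepow{p_0(x)}{m}$, then pass to $g_i(x)$ by linearity. The normal form $p_0(x)=\sum_j d^j x^{jv}$ with $d^{-n/v}=a$ that you derive from scratch is exactly Lemma~\ref{lem:disj_supp_form} (together with Lemma~\ref{lem:cycleEquiv_disj_divides}), so your anticipated shortcut is available; note that your appeal to Lemma~\ref{lem:min_weight} is justified because $p_0(x)\mid g_0(x)\mid x^n-a$, so $(p_0(x))$ really is a dimension-$v$ constacyclic code.

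The only genuine difference is in how the divisibility of a single Schur product is established. The paper writes each factor as $c_j(x)=\sum_{z<v}d_{j,z}\,x^z p_0(x)$ (using Lemma~\ref{lem:cycleEquiv_disj_divides} to reduce the range of $z$), expands the $m$-fold Schur product by distributivity, and then invokes the disjoint-support observation $(x^a p_0)\schur(x^b p_0)=0$ for $a\neq b$ in $\{0,\dots,v-1\}$ to annihilate all off-diagonal terms, leaving $\sum_{z<v} e_z\, x^z\,\codepow{p_0(x)}{m}$. You instead recast ``divisible by $\sum_j e^j x^{jv}$'' as the coordinate recurrence $c_t=e\,c_{t-v}$ and simply multiply $m$ such recurrences coordinatewise to obtain the recurrence with parameter $e=d^m$. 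Your route is slightly more direct and sidesteps the bookkeeping of the $m$-fold expansion; the paper's route keeps the disjoint-support hypothesis visibly in the foreground. The two arguments are equivalent in content.
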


\begin{proof}

Let $\codepow{C}{i \ell + 1} = (g_i(x))$. We will show that $\codepow{p_0(x)}{ i \ell+1}$ divides $g_i(x)$. By definition, 
$$g_i(x) = \sum_{j \in J} b_jr_j(x)$$
for units $b_j$ and $r_j(x) \in \codepow{C}{i\ell+1}$. Let $r(x) \in \{r_j(x) \mid j \in J\}$ arbitrarily. Then
\begin{equation}
\label{patDiv:r}
r(x) = \Pi_{j=1}^{i\ell+1} c_j(x)
\end{equation}
where each $c_j(x) \in (g_0(x))$. We will show that $p_0(x)^{i \ell+1}|r(x)$, and we then conclude $p_0(x)^{i \ell+1}|g_i(x)$. 

We note that $p_0(x) | g_0(x)$ and $g_0(x) | c_j(x)$, therefore, $p_0(x) | c_j(x)$. Hence we can rewrite 
$$c_j(x) = \sum_{z=0}^{n-1} d_{j,z} x^z p_0(x)$$
for constants $d_{j,z} \in \mathbb{F}$. We apply Lemma \ref{lem:cycleEquiv_disj_divides}, and conclude without loss of generality that 
\begin{equation}
\label{patDiv:c}
c_j(x) = \sum_{z=0}^{v-1} d_{j,z} x^z p_0(x).
\end{equation}
We substitute Equation \ref{patDiv:c} into Equation \ref{patDiv:r} to obtain
$$r(x) = \Pi_{j=1}^{i\ell+1} \sum_{z=0}^{v-1} d_{j,z} x^z p_0(x).$$
Then by associativity and commutativity, 
$$r(x)= \sum_{(z_1,\ldots,z_{i\ell +1}) \in [v-1]^{i \ell+1}} \Pi_{j=1}^{i \ell + 1} d_{j,z_j} x^{z_j} p_0(x).$$
Finally, since $(x^a p_0(x)) * (x^b p_0(x)) = 0$ unless $a=b$, we determine that
$$r(x) = \sum_{i=0}^{v-1} e_i x^i p_0^{i \ell +1}.$$
Thus, $p_0^{i \ell +1}$ divides each term in the linear combination whose summation is $g_i(x)$, and so $p_0^{i \ell + 1} | g_i(x)$.
\end{proof}

\subsection{Generators for codes of invariant dimension}
\label{sec:CHSmaxdimgen}
Next, in Theorem \ref{thm:pattern_past_result}, we determine the generator for all powers of a code in the constacyclic Hilbert sequence of maximal achieved dimension. Thus, Theorem \ref{thm:pattern_past_result} determines the behavior of a code under the Schur product once its dimension stops increasing. 

\begin{theorem}[Generators for codes of invariant dimension]
	Let $C \subset \F^n$ be a constacyclic code with modulus $f(x) = x^n - a$, generator $g(x)$, and pattern polynomial $p(x)$. Then for $z \ge r'(C)$, the generator $g_z(x)$ for $\codepow{C}{z\ell+1}$ is given by $\codepow{p(x)}{z \ell + 1}$. 
	 \label{thm:pattern_past_result}	
\end{theorem}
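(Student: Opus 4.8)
The plan is to leverage the two lemmas already established—Lemma \ref{lem:patDiv}, which gives the divisibility $\codepow{p(x)}{z\ell+1} \mid g_z(x)$, and Lemma \ref{lem:patAch} (referenced in the proof of Theorem \ref{thm:pattern}), which guarantees $\codepow{p(x)}{z\ell+1} \in \codepow{C}{z\ell+1}$ once $z \ge r'(C)$. Combining these two facts should pin down $g_z(x)$ exactly. Concretely, since $g_z(x)$ is the \emph{minimal-degree} generator of the ideal $\codepow{C}{z\ell+1}$, and since $\codepow{p(x)}{z\ell+1}$ is an element of that ideal, we have $g_z(x) \mid \codepow{p(x)}{z\ell+1}$; together with $\codepow{p(x)}{z\ell+1} \mid g_z(x)$ from Lemma \ref{lem:patDiv}, the two polynomials are associates, and after normalizing the constant term (both have constant term $1$, since $p(0)=1$ and $g_z(0)=1$) they are equal.

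First I would recall why $g_z(x)$ divides every element of its own ideal $\codepow{C}{z\ell+1}$: by Lemma \ref{lem:chain}, $\codepow{C}{z\ell+1}$ is a genuine constacyclic code, hence an ideal in $\F[x]/(x^n-a)$, and its minimal-degree generator divides $x^n-a$ and divides every codeword-polynomial (this is the standard ideal structure, and is essentially Lemma \ref{lem:basis} applied to $\codepow{C}{z\ell+1}$). Next I would invoke Lemma \ref{lem:patAch} to certify that $\codepow{p(x)}{z\ell+1}$ is actually a codeword of $\codepow{C}{z\ell+1}$ whenever $z \ge r'(C)$—this is precisely the step where the hypothesis $z \ge r'(C)$ is used, since below the constacyclic Castelnuovo–Mumford regularity one cannot guarantee the pattern-polynomial power lies in the code. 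Then I would chain the two divisibilities to conclude $g_z(x)$ and $\codepow{p(x)}{z\ell+1}$ are associates in $\F[x]$, and finish by comparing constant terms: $p(0)=1$ forces $\codepow{p(x)}{z\ell+1}$ to have constant term $1$, and the generator is conventionally taken with $g_z(0)=1$, so the unit relating them is $1$ and the polynomials coincide.

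The main obstacle is making sure Lemma \ref{lem:patAch} is actually available and says what I need—namely that $\codepow{p(x)}{z\ell+1} \in \codepow{C}{z\ell+1}$ for $z \ge r'(C)$, not merely for the single product-of-indices form $(i\ell+1)(r'(C)\ell+1)$ that appears in the proof of Theorem \ref{thm:pattern}. If Lemma \ref{lem:patAch} only provides membership for that particular composite exponent, I would instead argue as follows: by Lemma \ref{lem:patDiv}, $\codepow{p(x)}{z\ell+1} \mid g_z(x)$, so it suffices to show $\deg g_z(x) \le \deg \codepow{p(x)}{z\ell+1} = n - \frac{n}{v}\cdot$(something)—i.e.\ that the dimension of $\codepow{C}{z\ell+1}$ is at least $\frac{n}{v'}$ where $n-v'$ is the degree of $\codepow{p(x)}{z\ell+1}$. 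By Lemma \ref{lem:patAch} applied at a large enough exponent together with Definition \ref{def:CMR} / Lemma \ref{R15-236} (the dimension has stabilized once $z\ell+1 \ge r(C)$, which holds since $z \ge r'(C)$), the dimension of $\codepow{C}{z\ell+1}$ equals the stable dimension $\dim \codepow{C}{r(C)}$, and one checks that this stable dimension matches $\deg$-complement of $\codepow{p(x)}{z\ell+1}$ using that the pattern polynomial of $\codepow{C}{z\ell+1}$ is $\codepow{p(x)}{z\ell+1}$ (Theorem \ref{thm:pattern}) and that at the regularity the code is generated by its pattern polynomial. This degree/dimension bookkeeping—reconciling $\deg g_z$, $\deg \codepow{p}{z\ell+1}$, and the stabilized dimension—is the only place where genuine care is needed; everything else is formal ideal theory.
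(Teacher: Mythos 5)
Your proposal is correct, but it takes a noticeably different (and more economical) route than the paper. The paper proves the theorem by induction on $z$: the base case $z=r'(C)$ combines Theorem \ref{thm:pattern}, Lemma \ref{R15-236} (dimension stabilization forces a disjoint-support basis) and Lemma \ref{lem:disj_supp_form} to show the generator at the regularity has the rigid form $u\sum_i d^i x^{vi}$, hence is its own pattern polynomial and equals $u\,\codepow{p(x)}{r'(C)\ell+1}$; the inductive step then exhibits $\codepow{p(x)}{z\ell+1}$ as a codeword and uses the $(5)\rightarrow(1)$ direction of Lemma \ref{lem:disj_supp_form} again. You instead sandwich the generator between two divisibilities: Lemma \ref{lem:patDiv} gives $\codepow{p(x)}{z\ell+1}\mid g_z(x)$ for every $z$, while Lemma \ref{lem:patAch} (whose statement is exactly the membership $\codepow{p(x)}{i\ell+1}\in\codepow{C}{i\ell+1}$ for all $i\ge r'(C)$, so your contingency plan in the last paragraph is unnecessary) plus the principal-ideal structure of the constacyclic code $\codepow{C}{z\ell+1}$ (Lemma \ref{lem:chain} together with the standard minimal-degree-generator argument) gives $g_z(x)\mid\codepow{p(x)}{z\ell+1}$; since both are nonzero, they are associates, and normalizing constant terms (the paper itself only pins the generator down up to a unit $u$) finishes the proof. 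There is no circularity: Lemmas \ref{lem:patDiv} and \ref{lem:patAch} are proved in the paper without appeal to Theorem \ref{thm:pattern_past_result}. What your route buys is brevity and a clean separation of concerns --- all the disjoint-support/structure work is delegated to Lemma \ref{lem:patAch}, whose proof in effect already contains the paper's base case; what the paper's route buys is that it re-derives the explicit form $u\sum_i d^i x^{vi}$ of the stabilized generator along the way, which it reuses elsewhere (e.g.\ in Theorem \ref{thm:main_gen} and Lemma \ref{lem:cyc}). The one point you should state rather than gesture at is that the minimal-degree element of an ideal in $\F[x]/(x^n-a)$ divides every codeword polynomial (division with remainder plus minimality), which justifies $g_z(x)\mid\codepow{p(x)}{z\ell+1}$; with that spelled out, your argument is complete.
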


\begin{proof}
We provide a proof by induction on $z$. In the base case, we let $z = r'(C)$. We then apply Theorem \ref{thm:pattern} to determine that $\codepow{p(x)}{r'(C)\ell + 1}$ is the pattern polynomial for $\codepow{C}{r'(C)\ell+1}$. Furthermore, by definition of $r'(C)$ we note that
$$v = \dim(\codepow{C}{r'(C)\ell+1}) = \dim(\codepow{C}{(r'(C)\ell+1)(\ell+1)}).$$
We therefore apply Lemma \ref{R15-236} to determine that $\codepow{C}{r'(C)\ell+1}$ is generated by a basis of disjoint support. We then use Lemma \ref{lem:disj_supp_form} to determine that for some unit $u$, $v | n$, and $d^{-\frac{n}{v}} = a$, $\codepow{C}{r'(C)\ell+1}$ is generated 
$$g_{r'(C)}(x)= u \cdot \sum_{i=0}^{\frac{n}{v}-1} d^i x^{v \cdot i}.$$
By definition, $g_{r'(C)}(x)$ is its own pattern polynomial, so $g_{r'(C)}(x) = u \cdot \codepow{p_0(x)}{r'(C)\ell+1}$.

For the inductive step, we consider any $z > r'(C)$.  We note that 
$$\codepow{p_0(x)}{r'(C)\ell+1} * \codepow{g(x)}{(i-z)\ell} = u \cdot \codepow{p_0(x)}{i\ell+1}$$ for some unit $u$, hence, 
$$\codepow{p_0(x)}{i\ell+1} \in \codepow{C}{i\ell+1}.$$
Since $z > r'(C)$, we know that $v =  \dim(\codepow{C}{z\ell+1}) = n-\deg(p_0(x))$, so 
$$\codepow{C}{z\ell+1} = (B) = \inparen{\inset{\coeff\inparen{x^j \codepow{p_0(x)}{z\ell+1}} , 0 \le j < v}}.$$
Finally, an application of Lemma \ref{lem:disj_supp_form}, specifically the proof for $(5) \rightarrow (1)$, shows that $\codepow{p_0(x)}{z\ell+1}$ is a generator for $\codepow{C}{z\ell+1}.$
 
\end{proof}

In Lemma \ref{lem:patAch}, we show a relationship between the powers of a base code's pattern polynomial and powers of the code in the constacyclic Hilbert sequence for powers greater than or equal to the constacyclic Castelnuovo-Mumford regularity. Lemma \ref{lem:patAch} is crucial part of the proof of Theorem \ref{thm:pattern_past_result}.
\begin{lemma}
\label{lem:patAch}
Let $C \subset \F^n$ be a constacyclic code with modulus $f(x) = x^n - a$, dimension $k$, generator $g_0(x)$, and pattern polynomial $p_0(x)$ of degree $n-v$. Consider the constacyclic Hilbert sequence $C, \ldots$. Then for each $i$ such that $i \ell + 1 \ge  r'(C)\ell+1$, $\codepow{p_0(x)}{ i \ell +1} \in \codepow{C}{i \ell + 1}$. 
\end{lemma}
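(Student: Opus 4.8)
The plan is to pin down $\codepow{C}{i\ell+1}$ exactly as the constacyclic code generated by $\codepow{p_0(x)}{i\ell+1}$, which makes the membership automatic. Write $m=i\ell+1$; since $i\ge r'(C)$, the definition of the constacyclic Castelnuovo--Mumford regularity gives $m=i\ell+1\ge r'(C)\ell+1\ge r(C)$. Hence by Lemma~\ref{R15-236} the code $\codepow{C}{m}$ has a basis of disjoint supports, and by Lemma~\ref{lem:chain} it is constacyclic over $x^n-a$; so by Lemma~\ref{lem:disj_supp_form} its minimal-degree generator has disjoint-support form, $\kappa:=\dim\codepow{C}{m}$ divides $n$, and (up to a unit) the generator is $\sum_{t=0}^{n/\kappa-1}b^t x^{\kappa t}$ with $b^{-n/\kappa}=a$. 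On the other hand $\codepow{p_0(x)}{m}$ is itself of disjoint-support shape with parameter $v$: writing $p_0(x)=\sum_{t=0}^{n/v-1}d^t x^{vt}$ (Lemma~\ref{lem:disj_supp_form}, normalized by $p_0(0)=1$), its coordinatewise $m$-th power is $\sum_t (d^m)^t x^{vt}$, and $(d^m)^{-n/v}=(d^{-n/v})^m=a^m=a$, so $\codepow{p_0(x)}{m}\mid x^n-a$ and it generates a constacyclic code $D$ of dimension $v$ with a disjoint-support basis. By Lemma~\ref{lem:patDiv}, $\codepow{p_0(x)}{m}$ divides the generator $g_m$ of $\codepow{C}{m}$, so $\codepow{C}{m}\subseteq D$ and $\kappa\le v$. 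It therefore suffices to prove $\kappa\ge v$, because then $\codepow{C}{m}=D=(\codepow{p_0(x)}{m})$ contains its own generator $\codepow{p_0(x)}{m}$, which is the claim.

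To get $\kappa\ge v$ I would exhibit a disjoint-support divisor of $g_0(x)$ whose parameter is $\kappa$; by the maximality of $p_0(x)$ in Definition~\ref{def:pattern_polynomial} this forces $n-v=\deg p_0\ge n-\kappa$, i.e.\ $v\le\kappa$. The divisor is read off the zero set of $C$. Over a splitting field pick an $n$-th root $\theta$ of $a$ (note $\gcd(n,\operatorname{char}\F)=1$, so $x^n-a$ is separable and all roots are simple); the substitution $c(x)\mapsto c(\theta x)$ is a ring isomorphism $\F[x]/(x^n-a)\to\bar\F[x]/(x^n-1)$ carrying $C$ to a cyclic code $C'$ with the same zero set. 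It does not commute with the Schur product, but it differs from a Schur-preserving map only by a fixed diagonal monomial rescaling, so $\dim\codepow{C}{m'}=\dim\codepow{C'}{m'}$ for every $m'$. For a cyclic code $C'$ with frequency support $\bar Z\subseteq\Z_n$ one has $\dim\codepow{C'}{m'}=|\,\underbrace{\bar Z+\cdots+\bar Z}_{m'}\,|$, which stabilizes to $|\langle\bar Z-\bar Z\rangle|$; hence $\kappa$ equals the order of the subgroup $\langle\bar Z-\bar Z\rangle\le\Z_n$, and in particular $\bar Z$ lies in a single coset of the unique index-$\kappa$ subgroup. Transporting back through $\theta$, the roots of the cogenerator $(x^n-a)/g_0(x)$ of $C$ all lie in a single coset of $\mu_\kappa$ inside the $n$-th roots of $a$. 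Letting $q(x)=\dfrac{a^{-1}(x^n-a)}{b_0x^\kappa-1}$ be the disjoint-support polynomial of parameter $\kappa$ whose complementary roots (the roots of $b_0x^\kappa-1$) are exactly that coset, one checks $\operatorname{roots}(q)\supseteq\operatorname{roots}(g_0)$, hence $q(x)\mid g_0(x)$, and $q(x)$ is the desired disjoint-support divisor of $g_0(x)$ of parameter $\kappa$.

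The routine parts are the disjoint-support normal forms, the divisibility $\codepow{p_0(x)}{m}\mid g_m$, and the final identification $\codepow{C}{m}=D$, all handled with Lemmas~\ref{R15-236}, \ref{lem:chain}, \ref{lem:patDiv}, and \ref{lem:disj_supp_form}. I expect the real obstacle to be the lower bound $\kappa\ge v$: establishing that the stable dimension of the Hilbert sequence is the order of the subgroup generated by the differences of the zero set, which is exactly where the sumset description of Schur powers of cyclic codes, together with the monomial twist $c(x)\mapsto c(\theta x)$ reducing the constacyclic case to the cyclic one, carries the argument.
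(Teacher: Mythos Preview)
Your argument is correct and genuinely different from the paper's. The paper never leaves the coefficient domain or the ground field: it observes that the generator $g_{r'(C)\ell+1}$ of the stabilized code is already in disjoint-support form (hence is its own pattern polynomial with some parameter $v'$), and then argues by contradiction that $v'=v$. The contradiction is obtained by forming the Schur products $r_j(x)=u_j\,(x^j\codepow{g_0}{\ell})*g_{r'(C)\ell+1}$, using that the next constacyclic power has the same dimension to force each $r_j$ to be either zero or a fixed normalization of $g_{r'(C)\ell+1}$, and reading this constraint back as a disjoint-support divisor of $g_0$ with parameter $v'$; if $v'<v$ this contradicts the maximality of $p_0$. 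Your route instead passes to the root/frequency description: after extending scalars and twisting to a cyclic code, the sumset identity $\dim\codepow{C'}{m'}=|m'\!\cdot\!\bar Z|$ pins down the stable dimension $\kappa$ as $|\langle\bar Z-\bar Z\rangle|$, so the non-zeros $\bar Z$ lie in a single coset of the order-$\kappa$ subgroup, which yields a disjoint-support divisor $q$ of $g_0$ with parameter $\kappa$, and maximality of $p_0$ gives $v\le\kappa$. Your approach is more structural and actually proves the stronger identification $\codepow{C}{m}=(\codepow{p_0}{m})$ (essentially Theorem~\ref{thm:pattern_past_result}) along the way; the paper's approach is more self-contained and needs no base change or sumset lemma.

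Two small points to tighten. First, the root containment you want is $\operatorname{roots}(g_0)\supseteq\operatorname{roots}(q)$ (the complement of the coset sits inside the defining set), which is what gives $q\mid g_0$; as written the inclusion is reversed, though your conclusion is the intended one. Second, you should verify $q\in\F[x]$, i.e.\ $b_0\in\F$, since the maximality clause in Definition~\ref{def:pattern_polynomial} is over $\F$. This follows by Galois descent: every $\sigma\in\operatorname{Gal}(\bar\F/\F)$ permutes the roots of $h_0=(x^n-a)/g_0\in\F[x]$, all of which lie in the single $\mu_\kappa$-coset, so $\sigma$ fixes $b_0^{-1}=\alpha^\kappa$ for any root $\alpha$ of $h_0$. (Also, ``the unique index-$\kappa$ subgroup'' should read ``order-$\kappa$''; your subsequent use of $\mu_\kappa$ is the correct object.)
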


\begin{proof}
Clearly it suffices to prove 
$$\codepow{p_0(x)}{ r'(C)\ell + 1} \in \codepow{C}{ r'(C)\ell+1}$$
since then 
$$\codepow{p_0(x)}{ r'(C)\ell + 1}* \codepow{g_0(x)}{ (i-r'(C)) \ell} = \codepow{p_0(x)}{i \ell+1}.$$
We know that 
$$\dim(\codepow{C}{2(r'(C) \ell+1)}) = \dim(\codepow{C}{r'(C)\ell +1}.$$
Therefore, by Lemma \ref{R15-236}, $\codepow{C}{r'(C) \ell+1}$ is generated by a basis $B$ of vectors with disjoint support. By Lemma \ref{lem:disj_supp_form}, we know that the generator $g_{r'(C) \ell+1}(x)$ for $\codepow{C}{r'(C) \ell+1}$ is of the form
$$g_{r'(C) \ell+1}(x)= u \cdot \sum_{i=0}^{\frac{n}{v'}-1} d^i x^{v' \cdot i}.$$
Thus $g_{r'(C) \ell+1}(x)$ is its own pattern polynomial, and it suffices to show that 
$$g_{r'(C) \ell+1}(x) = u \codepow{p_0(x)}{r'(C) \ell+1}.$$

Suppose not, then there is some other pattern polynomial $p'(x) \in \codepow{C}{r'(C)\ell+1}$ with degree $n-v' > n-v$,$v' =\dim(\codepow{C}{r'(C) \ell+1})$ and $g_{r'(C) \ell+1}(x) = u p'(x)$. Let 
$$r_j(x) = u_j (x^j \codepow{g_0(x)}{\ell}) * g_{r'(C) \ell+1}(x)$$
where $u_j$ is a unit chosen to ensure that $r_j(0) = 1$ if $r_j(x) \neq 0$, if $r_j(x) = 0$, we simply define $u_j = 1$.

Since dimension doesn't increase and $g_{r'(C)\ell+1}$ is a minimum weight vector, 
$$\supp(r_j(x)) \in \{\emptyset, \supp(g_{r'(C)\ell+1}(x))\}.$$
For any $j \neq j'$ we will show $r_j(x) - r_{j'}(x) \in \{0, r_j(x),-r_{j'}(x)\}$. Clearly, if $r_j(x) = 0$ or $r_{j'}(x) = 0$ it holds. Thus we consider only when $r_j(x), r_{j'}(x) \neq 0$. In that case, 
$$\coeff(r_j(x))[1] = \coeff (r_{j'}(x))[1] = 1$$ and 
$$\forall m \in \{2,\ldots, v'\} g_{r'(C)\ell+1}(x)[m] =r_j(x)[m] = r_{j'}(x)[m] = 0.$$
Therefore, $\coeff (r_j(x)-r_{j'}(x))$ begins with at least $v'$ consecutive zeroes,  so $r_j(x) - r_{j'}(x) = 0$ by Lemma \ref{lem:consecutive}. 

In order to ensure such properties of the $r_j(x)$, it must be the case that
$$g_0(x) = u \sum_{j=0}^{v'-1} e_j p^*(x)$$
where $\deg(p^*(x)) = n-v'$ and $\supp(p^*(x)) = \supp(g_{r'(C) \ell+1}(x))$. Furthermore, $v' | n$ and without loss of generality we can chose the $e_j$ such that $p^*(0) = 1$. Therefore, $p^*(x)$ has all characteristics of the pattern polynomial except that it might not be of highest degree. Since $n-v' > n-v$, this violates that $p(x)$ was the pattern polynomial. Thus, the assumption $g_{r'(C) \ell+1}(x)$ is equal to some other pattern polynomial is false, 
$$g_{r'(C) \ell+1}(x) = u \codepow{p_0(x)}{r'(C) \ell +1}.$$

\end{proof}

In Lemma \ref{lem:disj_supp_form}, we will show that several different properties of a code are equivalent to the code having a basis with disjoint support. As determined in Lemma \ref{R15-236}, having a basis with disjoint support is necessary and sufficient to ensure that a code's dimension is invariant under the Schur product. Thus, it will be helpful to have a greater understanding of the structure of such a code. Lemma \ref{lem:disj_supp_form} is also used in the proof of Theorem \ref{thm:pattern_past_result}	.

\begin{lemma}[Disjoint support basis equivalence]
	\label{lem:disj_supp_form}
	Let $C \subset \F^n$ be a constacyclic code with modulus $f(x) = x^n - a$, generator polynomial $g(x)$ of degree $n-k$ such that $g(x) | f(x)$ and $k|n$ .
	Then the following are equivalent: 
	\begin{enumerate}
		\item
			$g(x)= u \cdot \sum_{i=0}^{\frac{n}{k}-1} d^i x^{k \cdot i}$ where $u$ is a unit and $d^{-\frac{n}{k}} = a$
		\item
			$G = uG'$ for some unit $u$
		\item 
			$B= \{\coeff (x^i \cdot g(x)), 0 \le i < k\}$ has disjoint support
		\item
			$C$ is spanned by a basis of disjoint support
		\item
			there is some vector $c \in C$ such that $\wt(c) = \frac{n}{k}$.
	\end{enumerate}
\end{lemma}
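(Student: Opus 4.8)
The plan is to prove the equivalences by establishing a cycle of implications $(1) \Rightarrow (2) \Rightarrow (3) \Rightarrow (4) \Rightarrow (5) \Rightarrow (1)$, possibly with a few shortcut implications where they are easier. The implication $(4) \Leftrightarrow (5)$ about disjoint support being tied to a codeword of weight $\frac{n}{k}$ is essentially Lemma \ref{lem:min_weight}: a constacyclic code of length $n$ and dimension $k$ has minimum distance $\ge \frac{n}{k}$, and a basis with disjoint support forces the total weight of the $k$ basis vectors to be at most $n$, hence each has weight exactly $\frac{n}{k}$, giving $(4) \Rightarrow (5)$; conversely Lemma \ref{lem:min_weight} pins down the support of any weight-$\frac{n}{k}$ codeword to an arithmetic progression with common difference $k$, which (after shifting) is exactly the support of $g(x)$, and the shifts of such a vector by multiples of... wait, one must be careful since shifting by $1$ need not preserve support-disjointness — rather, the shifts $x^i g(x)$ for $0 \le i < k$ have supports that are the progression shifted by $i$, and since the progression has gap $k$ these are disjoint for $0 \le i < k$.

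First I would prove $(1) \Rightarrow (2)$: if $g(x) = u \sum_{i=0}^{n/k - 1} d^i x^{ki}$ then the generator matrix $G$ has rows $\coeff(x^j g(x))$; I would check directly that reducing $G$ to RREF just rescales each row by $u^{-1}$ (the pivot structure is already in place because consecutive rows have pivots in consecutive columns $0, 1, \ldots, k-1$ and the disjoint-support pattern means no row reduction across rows is needed), so $G = u G'$. Then $(2) \Rightarrow (3)$ is immediate since $G'$ is in RREF with its left $k \times k$ block equal to $I_k$; the rows of $G$ being scalar multiples of the rows of $G'$, and $G'$ having a single $1$ in each of its first $k$ columns, forces... here I need that the remaining columns of $G'$ also respect disjointness, which follows because each row of $G$ (a shift of $g$) has weight at most $\frac{n}{k}$ and there are $k$ rows spanning a code whose every nonzero word has weight $\ge \frac{n}{k}$, so by a counting/pigeonhole argument on total weight $\le n$ the supports must be disjoint with each row of weight exactly $\frac{n}{k}$. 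Then $(3) \Rightarrow (4)$ is trivial since $B$ is a basis of $C$ by Lemma \ref{lem:basis}.

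The implication I expect to be the main obstacle is $(5) \Rightarrow (1)$, i.e. extracting the precise multiplicative structure $c_j = d \cdot c_{j-k}$ from the mere existence of a minimum-weight codeword. The approach: take $c \in C$ with $\wt(c) = \frac{n}{k}$; by Lemma \ref{lem:min_weight}, after a shift we may assume $\supp(c) = \{0, k, 2k, \ldots, (n/k - 1)k\}$, so $c = \coeff\bigl(\sum_{i=0}^{n/k-1} d_i x^{ki}\bigr)$ for nonzero $d_i$. Now I would apply the shift operator: $\sh^k c \in C$ also has weight $\frac{n}{k}$ and support $\{0, k, \ldots\}$ again (since shifting by $k$ cycles this progression back to itself, with the wraparound term picking up a factor of $a$), and comparing $\sh^k c$ against $c$ — both being weight-$\frac{n}{k}$ codewords supported on the same $k$-indexed progression, living in a code where (by Lemma \ref{lem:consecutive} / minimum distance) any codeword supported on a strict subset must be zero — forces $\sh^k c$ to be a scalar multiple of $c$. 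Writing out $\sh^k c = \lambda c$ coordinate-wise gives $d_i = \lambda d_{i-1}$ for $i \ge 1$ and the wraparound relation $a d_{n/k - 1} = \lambda d_0$, whence $d_i = \lambda^i d_0$ and $a \lambda^{n/k-1} = \lambda$, i.e. $\lambda^{n/k} = a$; setting $d = \lambda$ and absorbing $d_0$ into the unit $u$ yields $g(x)$ of the stated form — here I also need that this $c$ (scaled) actually is the minimal-degree generator, which holds because $\deg$ of this polynomial is $n - k$ and $C$ has dimension $k$, matching. Finally $d^{-n/k} = a^{-1}$... I should double-check the sign/inverse convention against the statement $d^{-n/k} = a$; the wraparound in $\sh$ multiplies by $a$, so the relation is $a \lambda^{n/k - 1} = \lambda \Rightarrow a = \lambda^{2 - n/k}$, so one must track this carefully, but it is a routine computation once the structure is in hand.
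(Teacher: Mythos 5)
Your overall plan follows the paper's proof exactly ($(1)\rightarrow(2)\rightarrow(3)\rightarrow(4)\rightarrow(5)\rightarrow(1)$, with the same arguments for $(1)\rightarrow(2)$, $(3)\rightarrow(4)$, $(4)\rightarrow(5)$, and essentially the same idea for $(5)\rightarrow(1)$), but your step $(2)\rightarrow(3)$ has a genuine gap. You assert that ``each row of $G$ (a shift of $g$) has weight at most $\frac{n}{k}$'' and then run a counting argument with ``total weight $\le n$.'' Neither is available from the hypothesis $G = uG'$: the rows of $G$ all have weight $\wt(g)$, and nothing in (2) bounds $\wt(g)$ by $\frac{n}{k}$; moreover the bound ``total weight of the $k$ rows is at most $n$'' is exactly what disjointness of supports would give you, so the argument is circular. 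Establishing $\wt(g)=\frac{n}{k}$ is essentially the content of (3)/(1), so it cannot be taken as an input here. The paper closes this step differently: from $G=uG'$ one gets $g(0)\ne 0$ and $\coeff(g)[m]=0$ for $1<m\le k$; then $x^k g(x)\bmod f(x)$ is a codeword whose coordinates in the pivot positions $2,\ldots,k$ vanish, so expanding it in the rows of $G'$ forces $x^k g(x) = d\, g(x)$ for some nonzero $d$. Hence $\supp(g)$ is invariant under the shift-by-$k$ map, and since $k\mid n$ the orbits of that map are $\{j, j+k, j+2k, \ldots\}$; the orbit of position $1$ is included and the orbits of positions $2,\ldots,k$ are excluded, giving $\supp(g)=\{1+zk : 0\le z < \tfrac{n}{k}\}$ and thus (3). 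This periodicity observation is the missing idea in your proposal.

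Two smaller remarks. In $(5)\rightarrow(1)$ your recurrence is written in the wrong direction: $\shi{k} c = \lambda c$ gives $d_{i-1} = \lambda d_i$ (not $d_i = \lambda d_{i-1}$) together with $a\, d_{\frac{n}{k}-1} = \lambda d_0$, so $\lambda^{\frac{n}{k}} = a$ and setting $d=\lambda^{-1}$ yields $d^{-\frac{n}{k}}=a$ as stated; this is the routine bookkeeping you flagged, and your derivation ``$a=\lambda^{2-n/k}$'' is an artifact of the reversed recurrence. Your identification of the shifted minimum-weight codeword with a scalar multiple of $g(x)$ via its degree being $n-k$ is fine (the paper argues the same point via the vanishing of the top $k$ coefficients), and your use of the minimum distance to force $\shi{k}c - \lambda c = 0$ is a valid variant of the paper's spanning argument.
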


\begin{proof}

This proof will proceed by showing $(1) \rightarrow (2), (2) \rightarrow (3), (3)\rightarrow (4), (4) \rightarrow (5),$ and $(5) \rightarrow (1)$. 

We begin by showing that $(1) \rightarrow (2)$. Given (1), we know that G has rows with disjoint support and the first $k$ columns of $G$ are a diagonal matrix where each entry is $g(0)$. Therefore Gaussian Elimination would only multiply the matrix by $g(0)^{-1}$, leaving $G = u' G'$.

We will now show that $(2) \rightarrow (3)$. We note that if we take the first row of $G'$ it is equal to $u^{-1}g(x)$ for unit $u$. Then if we take $x^j u^{-1} g(x)$ for $0 \le j < k$, we see that 
$$\supp( \coeff(x^ju^{-1} g(x))) \cap \{1,\ldots, k\} = \{j+1\},$$
since the dimension is $k$ and so $G' = [I_K|M]$ for some $M \in \F^{k \times (n-k)}$. Given that the rows of $G'$ span $C$ and $\coeff (x^k g(x))[m] = 0$ for $1 <m \le k$, it is clear that $x^k g(x) = d g(x)$ for some nonzero $d \in \F$.  Consequently, $\supp (g(x)) = \supp (x^{ek} g(x))$ for any integer $e$. Hence, $j \in \supp (g(x))$ for $0 \le j < k$ if and only if $j+zk \in \supp (g(x))$ for all $0 \le z < \frac{n}{k}$. Since $deg(g(x)) = n-k$ we know $j \notin \supp (g(x))$ for $j \in [k-1] \setminus \{1\}$. Finally, we conclude $\supp (g(x)) = \{1 + zk, 0 \le z < \frac{n}{k}\}$, ensuring that $B$ has  disjoint support. 

Clearly, $(3) \rightarrow (4)$, as $\{ \coeff( x^j g(x)), 0 \le j < k\}$ is a basis of disjoint support that spans $C$. 

We now show $(4) \rightarrow (5)$. Suppose $C$ is spanned by a basis $B$ with disjoint support. For each $b \in B$ we know by Lemma \ref{lem:min_weight} that $\wt(b) \ge \frac{n}{k}$. Furthermore, since the vectors of $B$ have disjoint support we know $n \ge \wt(\sum_{b \in B} b) = \sum_{b \in B} \wt(b) \ge k \frac{n}{k} \ge n$. Thus $\wt(b) = \frac{n}{k}$ for each $b \in B$ to allow this. Hence, there is some $c \in C$ such that $\wt(c) = \frac{n}{k}$. 

We conclude by showing $(5) \rightarrow (1)$. Suppose $c \in C$ with $\wt(c) = \frac{n}{k}$, and without loss of generality assume $c[1] = 1$. Then by Lemma \ref{lem:min_weight}, $\supp(c) = \{1 + zk, 0 \le z < \frac{n}{k}\}$. Therefore, $B = \{s^i c, 0 \le i < k\}$ has disjoint support. This ensures that $B$ has dimension $k$, so it spans $C$. Furthermore, $\coeff(g(x)) = u c$ since $\coeff(g(x))[m] = 0$ for $n-k < m \le n$. We note that since $\supp (s^k c) = \supp (c)$ and $s^k c \in \spn(B)$, we can conclude $s^k c= e' c$ for some unit $e'$. Therefore, $ex^k g(x) = g(x)$ for some unit $e$. 
Hence,
$$g(x) = u \sum_{i=0}^{\frac{n}{k} -1} d_i x^{k i}$$
for unit $u$ and $d_0 = 1$. Then 
$$x^k g(x) -e g(x) = \inparen{\sum_{i=1}^{\frac{n}{k}-1} x^{ki}(ed_{i-1} -d_i) }+ (e d_{\frac{n}{k}-1}a-1) = 0.$$
This ensures that $d_{i+1} = e d_{i}$ and $e d_{\frac{n}{k}-1} = a^{-1}$. Consequently, $d_i = e^i$ and $d^{-\frac{n}{k}} = a$. 

\end{proof}

\subsection{Invariance under $\ell$-wise schur product}
\label{sec:dimInv}

In Theorem \ref{thm:main_gen}, we quantify how to identify codes whose powers are invariant under successive applications of the $\ell$-wise Schur product after the constacyclic Hilbert sequence reaches its maximum dimension. 
\begin{theorem}[Invariance under the $\ell$-wise Schur product]
  \label{thm:main_gen}
  
  	Let $C \subset \F^n$ be a constacyclic code with modulus $f(x) = x^n - a$, generator $g(x)$ and pattern polynomial 
  	$p(x) = \sum_{i=0}^{\frac{n}{v}-1} d^i x^{v i}.$
  	Then $\codepow{C}{r'(C)\ell+1} = \codepow{C}{z\ell+1}$ for all $z \ge r'(C)$ if and only if $d^{\ell} = 1$.

\end{theorem}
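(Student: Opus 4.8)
The plan is to use Theorem \ref{thm:pattern_past_result}, which tells us that for every $z \ge r'(C)$ the generator $g_z(x)$ of $\codepow{C}{z\ell+1}$ is exactly $\codepow{p(x)}{z\ell+1}$, so the question reduces to comparing the codes generated by $\codepow{p(x)}{r'(C)\ell+1}$ and $\codepow{p(x)}{z\ell+1}$ for $z > r'(C)$. Since a constacyclic code is determined by its generator polynomial (equivalently by its support pattern together with the ratio $d$, via Lemma \ref{lem:disj_supp_form}), these two codes coincide for all $z$ if and only if $\codepow{p(x)}{r'(C)\ell+1}$ and $\codepow{p(x)}{z\ell+1}$ generate the same code, i.e.\ differ only by a unit. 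So the real content is: compute $\codepow{p(x)}{m}$ explicitly as a polynomial for general exponent $m \equiv 1 \pmod \ell$ and see when raising the exponent changes the code.

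First I would carry out the explicit computation of the Schur power. Writing $p(x) = \sum_{j=0}^{n/v-1} d^j x^{vj}$, the coefficient vector has support $\{0, v, 2v, \ldots, (n/v-1)v\}$ with the value $d^j$ in coordinate $vj$. Taking the coordinate-wise $m$-th power leaves the support unchanged and replaces the value $d^j$ by $d^{jm}$, so
\[
	\codepow{p(x)}{m} = \sum_{j=0}^{n/v-1} d^{jm} x^{vj}.
\]
Thus $\codepow{C}{z\ell+1}$ has generator $\sum_{j} d^{j(z\ell+1)} x^{vj}$ and $\codepow{C}{r'(C)\ell+1}$ has generator $\sum_j d^{j(r'(C)\ell+1)} x^{vj}$; both have the same support, and by Lemma \ref{lem:disj_supp_form} each is (up to a unit) the unique disjoint-support generator on that support with its respective ratio, namely $d^{z\ell+1}$ and $d^{r'(C)\ell+1}$. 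Two such generators generate the same constacyclic code precisely when their ratios are equal, i.e.\ $d^{z\ell+1} = d^{r'(C)\ell+1}$ in $\F$, equivalently $d^{(z-r'(C))\ell} = 1$.

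It remains to show $d^{(z-r'(C))\ell}=1$ for all $z \ge r'(C)$ if and only if $d^\ell = 1$. The ``if'' direction is immediate. For ``only if,'' take $z = r'(C)+1$, which gives $d^\ell = 1$ directly. Hence $\codepow{C}{r'(C)\ell+1} = \codepow{C}{z\ell+1}$ for all $z \ge r'(C)$ iff $d^\ell = 1$, which is the claim. The one step that needs care — and which I expect to be the main obstacle to writing cleanly — is justifying that the generator of a disjoint-support constacyclic code on a fixed support is determined up to a unit by the ratio $d$, and conversely that distinct ratios give distinct codes; this is essentially the content of the equivalence $(1)\leftrightarrow(4)$ in Lemma \ref{lem:disj_supp_form} together with the uniqueness of the minimal-degree generator, but one must check that the $d$ appearing in the relation $d^{-n/v}=a$ is forced by the code and not merely by a choice of representative, so that equality of codes really does pin down equality of the ratios $d^{z\ell+1}$ and $d^{r'(C)\ell+1}$.
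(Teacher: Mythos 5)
Your argument is correct, and the ``if'' direction is essentially the paper's: both reduce via Theorem \ref{thm:pattern_past_result} to the explicit identity $\codepow{p(x)}{m} = \sum_{j} d^{jm} x^{vj}$ and note that $d^{\ell}=1$ makes the generators of $\codepow{C}{z\ell+1}$ and $\codepow{C}{r'(C)\ell+1}$ coincide. For the ``only if'' direction you take a mildly different route: you appeal to uniqueness (up to a unit) of the minimal-degree generator, so equality of the codes forces equality of the two generators normalized to constant term $1$, hence $d^{z\ell+1}=d^{r'(C)\ell+1}$, and $z=r'(C)+1$ gives $d^{\ell}=1$. The paper instead argues directly: since $\codepow{p(x)}{r'(C)\ell+1}$ and $\codepow{p(x)}{(r'(C)+1)\ell+1}$ both lie in the common code, their difference $\sum_{i} d^{i(r'(C)\ell+1)}\bigl(1-d^{\ell i}\bigr)x^{vi}$ is a codeword vanishing in the first position and in the last $v-1$ positions, hence has $v$ consecutive zeros and is zero by Lemma \ref{lem:consecutive}; the $i=1$ coefficient then forces $d^{\ell}=1$. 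The caveat you flag at the end --- that the ratio $d^{z\ell+1}$ is determined by the code and not by a choice of representative --- is real but easily closed, and is exactly what the paper's consecutive-zeros argument supplies: if two codewords of degree $n-v$ with constant term $1$ lay in the same dimension-$v$ constacyclic code, their difference would have $v$ consecutive zeros (cyclically) and so would vanish by Lemma \ref{lem:consecutive}, which pins down the generator and hence the ratio. Since $\deg\bigl(\codepow{p(x)}{z\ell+1}\bigr)=n-v=n-\dim\bigl(\codepow{C}{z\ell+1}\bigr)$ for $z\ge r'(C)$, your reduction is sound; the net difference is only that the paper lets Lemma \ref{lem:consecutive} do the work that your appeal to uniqueness of the minimal-degree generator does.
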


\begin{proof}
Suppose that $d^{\ell} = 1$. Then for $z \ge r'(C)$, by Theorem \ref{thm:pattern_past_result} we know that $\codepow{C}{z\ell+1} = (\codepow{p(x)}{r'(C)\ell+1}*\codepow{p(x)}{(z-r'(C))\ell})$. Furthermore, 
$$\codepow{p(x)}{(z-r'(C))\ell} = \sum_{i=0}^{\frac{n}{v}-1} (d^{\ell})^{(z-r'(C))i} x^{v i} = \sum_{i=0}^{\frac{n}{v}-1} x^{vi}.$$
Hence, $\codepow{C}{z\ell+1} =\codepow{C}{r'(C)\ell+1}$, as they are each generated by $(\codepow{p(x)}{r'(C)\ell+1})$.

Suppose $\codepow{C}{r'(C)\ell+1} = \codepow{C}{z\ell+1}$ for any $z \ge r'(C)$. Then by Theorem \ref{thm:pattern_past_result} we know $\codepow{C}{(r'(C)+1)\ell+1} = (\codepow{p(x)}{r'(C)\ell+1}*\codepow{p(x)}{\ell})$. Hence $\codepow{p(x)}{r'(C)\ell+1}, \codepow{p(x)}{(r'(C)+1)\ell+1}$ both belong to the code. Therefore, their difference is in the code. It is equal to 
$$\codepow{p(x)}{r'(C)\ell+1}-\codepow{p(x)}{r'(C)\ell+1}*\codepow{p(x)}{\ell} =  \sum_{i=0}^{\frac{n}{v}-1} d^{i (r'(C)\ell+1)}*(1- d^{\ell i}) x^{v i} = 0.$$

We note that it must equal zero since $p(0) = 1$ and $deg(p(x)) = n-v$. Therefore, the difference is zero in the final $v-1$ positions and first position. Thus, it has $v$ consecutive zeros, and by Lemma \ref{lem:consecutive} must equal 0. Hence, $ d^{i (r'(C)\ell+1)}*(1- d^{\ell i})  = 0$ for each $i$. Focusing on $i=1$, we see that $1= d^{\ell}$ as desired.

\end{proof}

\begin{remark}
As a motivation for Theorem \ref{thm:main_gen}, we observe that for a constacyclic code $C$ over modulus $x^n-a$, it is possible for $\dim(C) = \dim(\codepow{C}{\ell+1})$ while $C \neq \codepow{C}{\ell+1}$. Consider a code $C$ over $\F_5[x] / (x^4-1)$ and generator $g(x) = g(x) = x^3 + 2x^2+4x+3$. Then $\dim(C) = n-\deg(g(x)) = 1$, and $C = \spn([3,4,2,1])$. Furthermore, the basis of $\codepow{C}{2}$ is generated by $\spn([3,4,2,1]*[3,4,2,1])= \spn([4, 1,4,1])$. Thus $\dim\inparen{\codepow{C}{2}} = 1$. Yet, $[3,4,2,1] \notin \spn([4,1,4,1])$, so $C \neq \codepow{C}{2}$.
\end{remark}

\subsection{Determining the pattern polynomial}
\label{sec:idpat}

In Theorem \ref{thm:acquire_pat}, we will describe an efficient algorithm to compute the pattern polynomial for any constacyclic code.
\begin{theorem}[Determining the pattern polynomial]
Let $C \subset \F^n$ be a constacyclic code with modulus $f(x) = x^n - a$, generator $g(x)$, and dimension $k$. Let $w$ be the length of the input, where $w \ge \wt(g(x)) + \log(n) + \log(a)$ to include a description of $g(x),n,$ and $a$. It is possible to compute the pattern polynomial $p(x)$ in $O(w^2)$ time. After doing so, for $deg(p(x)) = n-v$, it is possible to compute $\{c_i\}$ such that $g(x) = \sum_{i=0}^{v-1} c_i x^i p(x)$ in $O(v)$ time.
\label{thm:acquire_pat}
\end{theorem}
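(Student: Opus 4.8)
The plan is to give a constructive algorithm and analyze its running time. Recall that, by Definition \ref{def:pattern_polynomial}, the pattern polynomial $p(x)$ is the highest-degree divisor of $g(x)$ with $p(0)=1$, degree $n-v$ for some $v\mid n$, such that the shifts $\{\coeff(x^i p(x)) : 0 \le i < v\}$ have pairwise disjoint support. By Lemma \ref{lem:disj_supp_form} (applied to the code generated by $p(x)$, which has dimension $v$), this disjoint-support condition is equivalent to $p(x)$ having the explicit form $p(x) = \sum_{i=0}^{n/v - 1} d^i x^{vi}$ for some unit $d$ with $d^{-n/v} = a$. So the algorithm reduces to: among all valid periods $v$ (i.e.\ divisors of $n$ for which $g(x)$ can be written as $\sum_{i=0}^{v-1} c_i x^i\bigl(\sum_{j=0}^{n/v-1} d^j x^{vj}\bigr)$), find the \emph{largest} one and output the corresponding $p(x)$.

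First I would observe that $v$ is a valid period exactly when $g(x)$ is ``$v$-periodic up to the twist $d$'': writing $g(x) = \sum_t g_t x^t$, we need a unit $d$ such that $g_{t} $ determines $g_{t+v}$ by $g_{t+v} = d \cdot g_t$ for all relevant $t$ (and the wraparound relation folds in the factor $a$, forcing $d^{-n/v}=a$). Equivalently, the support of $g(x)$ must be contained in a single residue class modulo $v$ together with its $d$-scaled translates — but since $\deg(g) = n-k < n$ and $g(0)=1$, one checks that $\supp(g)\subseteq \{vi : 0\le i\}$ is forced, and then $d$ is read off as $d = g_{v}$ (the coefficient of $x^v$), provided $g_v$ is a unit; validity is then a matter of checking $g_{vi} = d^i$ for all $i$ and $d^{-n/v}=a$. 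The key efficiency point is that we do \emph{not} iterate over all divisors of $n$ independently: the true period $v$ is the \emph{smallest} gap between consecutive elements of $\supp(g)$ that is consistent, and in fact the largest valid $v$ corresponds to the smallest consistent period of the coefficient pattern. Concretely, let $v_0$ be the smallest positive exponent appearing in $g(x)$ with nonzero coefficient (if $g(x)=1$ then $p(x)=1$, $v=n$, and we are done). Then any valid period $v$ must divide $v_0$ and must divide $n$; I would take $v = \gcd$ of all exponents in $\supp(g)$ together with $n$, then verify in one pass that $g_{vi}=d^i$ with $d=g_v$ and that $d^{-n/v}=a$. If verification succeeds, this $v$ is forced and maximal; if it fails, then $p(x)=1$.

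For the running time: the input has length $w \ge \wt(g(x)) + \log n + \log a$, so we have at most $w$ nonzero coefficients of $g(x)$, each a field element describable in the input. Computing the gcd of the $O(w)$ exponents (each an integer $< n$, so $O(\log n) = O(w)$ bits) takes $O(w)$ gcd operations on $O(w)$-bit numbers, i.e.\ $O(w^2)$ time; the single verification pass computes the powers $d^i$ for $i = 0,\ldots, n/v-1$ — but note $n/v = \wt(p(x))$ divides $\wt(g(x)) \le w$ in the successful case (since $p\mid g$ and both have disjoint-translate support), so this is $O(w)$ field multiplications, and checking $d^{-n/v}=a$ is one more modular exponentiation, $O(\log n) = O(w)$ multiplications. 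Hence $O(w^2)$ overall, dominated by the gcd computations (one could even argue $O(w\log w)$ with a smarter gcd, but $O(w^2)$ suffices). Finally, once $p(x)$ of degree $n-v$ is known, recovering $\{c_i\}$ with $g(x) = \sum_{i=0}^{v-1} c_i x^i p(x)$ is immediate: because the translates $x^i p(x)$ have disjoint support and each $\supp(x^i p(x))$ contains the single exponent $i$ among $\{0,\ldots,v-1\}$, we simply read off $c_i = g_i$, the coefficient of $x^i$ in $g(x)$, for $0 \le i < v$ — this is $O(v)$ time.

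The main obstacle I anticipate is the bookkeeping around the wraparound/twist: making fully rigorous the claim that the period $v$ is \emph{forced} to be $\gcd(\supp(g) \cup \{n\})$ and that this $v$ is the one the pattern polynomial uses (rather than some proper divisor giving a higher-degree $p$, or some multiple). This is where Lemma \ref{lem:disj_supp_form} does the heavy lifting — it pins down the exact form $p(x) = \sum d^i x^{vi}$ — but one still has to verify carefully that a candidate period failing the $d^{-n/v}=a$ or $g_{vi}=d^i$ test cannot be ``repaired'' and that no larger-degree pattern polynomial (smaller $v$) exists beyond the gcd, which follows because $p \mid g$ forces $\supp(p) \subseteq \supp(g)$ shifted appropriately, hence $v \mid$ every exponent of $g$. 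I would also need to handle the degenerate cases ($g(x)=1$ giving $p(x)=1$ with $v=n$; $g_v$ not a unit forcing $p(x)=1$) explicitly.
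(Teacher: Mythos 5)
Your per-candidate verification (quasi-periodicity $g_{t+v'} = d\, g_t$ with $d$ read off as the coefficient of $x^{v'}$, together with $v' \mid n$ and $d^{-n/v'} = a$, using Lemma \ref{lem:disj_supp_form} to pin down the shape of $p$) and your $O(v)$ read-off $c_i = g_i$ are correct and essentially match the paper. The gap is in how you locate $v$: your claim that $\supp(g) \subseteq \{vi : i \ge 0\}$ is ``forced,'' hence that $v$ divides every exponent of $g$ and can be taken as $\gcd(\supp(\coeff(g)) \cup \{n\})$, is false. Since $g(x) = \sum_{i=0}^{v-1} c_i x^i p(x)$ and several $c_i$ may be nonzero, $\supp(g)$ is a union of several residue classes modulo $v$; what is forced is only the containment $\supp(p) \subseteq \supp(g)$, and you have inverted this containment. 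Concretely, over $\F_5$ with $n = 8$, $a = 1$, take $g(x) = (1+x)(1+x^4) = 1 + x + x^4 + x^5$, which divides $x^8-1$ and generates a cyclic code of dimension $3$. Its pattern polynomial is $p(x) = 1 + x^4$ with $v = 4$: the shifts $x^i p(x)$, $0 \le i < 4$, have disjoint supports, and no higher-degree candidate exists (since $\deg g = 5$, a candidate would need $v' \in \{1,2\}$, ruled out because the coefficients of $x^2, x^3, x^6, x^7$ in $g$ vanish). Your algorithm instead computes $\gcd(1,4,5,8) = 1$, the $v = 1$ verification fails, and you output $p(x) = 1$, which is wrong. (There is also a direction slip in your write-up: the pattern polynomial has maximal degree $n - v$, so among valid periods you want the smallest $v$, not the largest.)

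The missing idea --- and what the paper's proof does --- is that the coefficient of $x^v$ in $g$ equals $c_0 d = d \ne 0$, so either $v = n$ or $v$ is itself one of the nonzero exponent positions of $g$. This yields a candidate set of size $O(w)$, namely $\{n\}$ together with $\supp(\coeff(g))$; one then runs an $O(w)$-time test (exactly of the kind you describe) on each of the $O(w)$ candidates and keeps the valid candidate with smallest $v'$, giving the claimed $O(w^2)$ bound. Your single-candidate gcd shortcut cannot be repaired into a one-shot search, because, as the example shows, the true $v$ need not satisfy any divisibility relation with the other support exponents of $g$.
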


\begin{proof}

The proof will proceed as follows. We begin by assuming without loss of generality that $g(0) = 1$. In $O(w)$ time, we will compute a candidate set $V$ of size $O(w)$ which contains $v$. We will then test each $v' \in V$ in $O(w)$ time to determine the whether there is some polynomial of degree $v'$ that satisfies all properties of the pattern polynomial, except perhaps being of highest degree. We take the highest degree such polynomial as the pattern polynomial. Afterwards, since $p(0) = g(0) = 1,$ we conclude $c_i = w[i+1]$ for $0 \le i < v.$

We know that a generator $g(x)$ has some pattern polynomial 
$$p(x) = \sum_{j=0}^{\frac{n}{v}-1} \alpha^i x^{v i}$$
such that 
$$g(x) = \sum_{i=0}^{v-1} c_i x^i p(x)= \sum_{i=0}^{n-1} b_i x^i.$$
We observe that $c_i = b_i$, as $p(0) = g(0) = 1$.
Clearly, $c_0 = 1$ since $g(0) = 1$, thus either $v = n$ or $\coeff (g(x))[v+1] \neq 0$. Therefore, in $O(w)$ time, we can compute 
$$V = \{n\} \cup \{i-1, \mid i \in \supp(\coeff(g(x)) \setminus \{1\} \},$$
while ensuring that $v \in V$ and $|V| = O(w)$.

For each $v' \in V$, we test in $O(w)$ time whether there is some polynomial $p'(x)$ of degree $n-v'$ satisfying all requirements of the pattern polynomial except perhaps being highest degree. If $v' = n$, then $p'(x)=1$ clearly satisfies all requirements of the pattern polynomial except perhaps being of highest degree. Otherwise, $p'(x)$ is the pattern polynomial if and only if (1) $v'|n$, 
$$(2)\text{ }p'(x) = \sum_{j=0}^{\frac{n}{v'}-1} d^j x^{v' j}, d^{-\frac{n}{v'}}= a,$$
$$(3)\text{ }g(x) = \sum_{j=0}^{v'-1}b_jx^j p'(x),$$
and (4) $p'(x)$ is the highest degree polynomial for which all these requirements are satisfied.

We note that the form for (2) is equivalent to the requirement that $\inset{\coeff(x^i p'(x)) \suchthat 0 \le i < v'}$ have disjoint support by Lemma \ref{lem:disj_supp_form}. Clearly, for $p'(x)$ to satisfy these requirements, $d=\coeff (g(x))[v'+1]$. It is straightforward to check (1) and that $d^{- \frac{n}{v'}} = a$ as well as compute both $p'(x)$ and 
$$g'(x) = \sum_{j=0}^{v'-1}b_jx^j p'(x).$$
To check (3), we simply verify that the $w$ nonzero coefficients of $\coeff (g(x))$ match the corresponding coefficients of $\coeff (g'(x))$ and that  no other nonzero coefficients of $g'(x)$ exists. Finally, (4) is achieved by taking the highest degree $p'(x)$ satisfying requirements (1), (2), and (3). Once all $v' \in V$ have been checked, we must be left with $p'(x)$ satisfying (1),(2),(3), and (4).

\end{proof}

\subsection{Additional Results}
\label{sec:misc}

In Lemma \ref{lem:cycleEquiv_disj_divides}, we expand upon the results from Lemma \ref{lem:disj_supp_form}, and show two properties of a polynomial taking the form of a generator found in Lemma \ref{lem:disj_supp_form}. These properties are used in the proofs of other lemmas in this work.
\begin{lemma}
Let $v|n$ and $p(x)= u \cdot \sum_{i=0}^{\frac{n}{v}-1} d^i x^{v \cdot i}$ where $u$ is a unit and $d^{-\frac{n}{v}} = a$. Then  (1) $x^v p(x) = d^{-1} p(x)$ over modulus $f(x)$ and (2) $p(x) | x^n-a$.
\label{lem:cycleEquiv_disj_divides}
\end{lemma}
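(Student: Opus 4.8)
The plan is to prove both claims directly from the explicit form of $p(x)$, working in the quotient ring $\F[x]/(x^n-a)$ throughout. For claim (1), I would simply multiply: since $p(x) = u \sum_{i=0}^{n/v - 1} d^i x^{vi}$, multiplying by $x^v$ shifts the exponents and gives $x^v p(x) = u \sum_{i=0}^{n/v-1} d^i x^{v(i+1)} = u \sum_{i=1}^{n/v} d^{i-1} x^{vi}$. The term with $i = n/v$ is $u\, d^{n/v - 1} x^{n}$, and reducing modulo $x^n - a$ replaces $x^n$ by $a$. Using the hypothesis $d^{-n/v} = a$, i.e.\ $a = d^{-n/v}$, this term becomes $u\, d^{n/v - 1} \cdot d^{-n/v} = u\, d^{-1}$, which matches the $i=0$ term of $d^{-1} p(x)$. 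Factoring out $d^{-1}$ from the whole sum then yields $x^v p(x) = d^{-1} p(x)$. The only thing to be careful about is the index bookkeeping and confirming $v \mid n$ (given) so that $x^{vi}$ for $i < n/v$ never needs reduction and $x^{v \cdot (n/v)} = x^n$ is the unique term that does.

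For claim (2), that $p(x) \mid x^n - a$ in $\F[x]$ (not just in the quotient), I would exhibit the cofactor explicitly. The natural guess, motivated by the geometric-series shape of $p$, is that $x^n - a = (x^v - d^{-1}) \cdot \big(u^{-1} p(x)\big)$ up to the unit $u$; more precisely I expect $u^{-1} p(x) \cdot (x^v - d^{-1})$ to telescope. Writing $q(x) = \sum_{i=0}^{n/v - 1} d^i x^{vi}$, one computes $(x^v - d^{-1}) q(x) = \sum_{i=0}^{n/v-1} d^i x^{v(i+1)} - \sum_{i=0}^{n/v-1} d^{i-1} x^{vi}$; the sums telescope, leaving $d^{n/v - 1} x^n - d^{-1}$. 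Since $a = d^{-n/v}$ we have $d^{-1} = d^{n/v-1} \cdot d^{-n/v} = d^{n/v-1} a$, so $(x^v - d^{-1}) q(x) = d^{n/v-1}(x^n - a)$. Hence $x^n - a = d^{1 - n/v}(x^v - d^{-1}) q(x) = d^{1-n/v} u^{-1} (x^v - d^{-1}) p(x)$, which is a genuine polynomial factorization exhibiting $p(x)$ as a divisor of $x^n - a$.

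There is no serious obstacle here; both parts are short computations. The one place requiring a moment of care is the interaction between ``reduction modulo $x^n-a$'' in part (1) versus the honest polynomial identity in part (2) — I want to make sure claim (2) is stated and proved as a divisibility of polynomials in $\F[x]$, and then note that claim (1) follows alternatively as a corollary (since $x^v p(x) - d^{-1}p(x)$ has degree $n$, leading term coming only from the $x^v \cdot d^{n/v-1}x^{n-v}$ product, and is divisible by $x^n - a$, forcing it to equal a scalar multiple of $x^n-a$; matching constant terms pins down the scalar). Either route works; I would present part (2) via the telescoping cofactor and then derive part (1) by reducing that identity mod $x^n-a$, which keeps the two claims tightly linked and minimizes the algebra.
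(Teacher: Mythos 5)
Your proposal is correct and follows essentially the same route as the paper: part (1) by the same telescoping reduction of $x^v p(x) - d^{-1}p(x)$ modulo $x^n-a$, and part (2) by exhibiting an explicit cofactor, where your $d^{1-n/v}(x^v - d^{-1})$ is exactly the paper's cofactor $(x^v\cdot a\cdot d - a)$ since $ad = d^{1-n/v}$. The only cosmetic difference is your optional reordering (deriving (1) from the polynomial identity in (2)), which changes nothing of substance.
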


\begin{proof}
We first show (1). By definition, $x^vp(x) - d^{-1} p(x) = (\sum_{i=1}^{\frac{n}{v}-1} (d^{i-1} - d^{-1} d^i)x^v) + d^{\frac{n}{v}-1} a - d^{-1} = 0$. 

We now show (2). In order to show $p(x) | x^n-a$ we simply show that $p(x) \cdot (x^v \cdot a \cdot d -a) =x^n-a$.
Upon evaluation, we determine
$$p(x) \cdot (x^{v} \cdot a \cdot d - a) = (\sum_{j=0}^{\frac{n}{v} -1} d^{j+1} \cdot a \cdot x^{v (j+1)})  - (\sum_{j=0}^{\frac{n}{v} -1} d^j \cdot a \cdot x^{v j}).$$
Reindexing and shifting terms yields
$$ d^{\frac{n}{v}} \cdot a \cdot x^{v \cdot \frac{n}{v}} + (\sum_{j=1}^{\frac{n}{v}-1} (d^{j} \cdot a \cdot x^{v j})  -d^j \cdot a \cdot x^{v j}) - a x^0.$$
We simplify to
$$ d^{\frac{n}{v}} \cdot a \cdot x^{n} - a = x^{n} -a.$$
Thus, $p(x) \cdot (x^{v} \cdot a \cdot d - a) = x^n-a$.
\end{proof}

The constacyclic Hilbert sequence monotonically increases in dimension until the dimension stops increasing. We show in Lemma \ref{lem:cyc}  that when $\F$ is a finite field, after the dimension stops increasing, the constacyclic Hilbert sequence cycles between at most $|\F|$ different codes. Thus, we further qualify the behavior of such codes of sufficiently large product dimension.
\begin{lemma}[Cycles in the constacyclic Hilbert sequence]
\label{lem:cyc}
Let $C \subset \F^n$ for $\F$ a finite field of order $|F|$ be a constacyclic code with modulus $f(x) = x^n - a$, generator $g(x)$, and pattern $p(x)= \sum_{j=0}^{\frac{n}{v}-1} c_j x^{v j}$. Then $\codepow{C}{r'(C)\ell+1}, \codepow{C}{(r'(C)+1)\ell+1}, \ldots$ forms a cycle of length at most $|\F|$. 	
\end{lemma}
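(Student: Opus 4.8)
The plan is to feed the explicit description of the generators from Theorem~\ref{thm:pattern_past_result} into an elementary count of the units of $\F$. First I would pin down the shape of $p(x)$: since $p(x)$ is its own pattern polynomial with $p(x)\mid g(x)\mid f(x)$ and $\{\coeff(x^i p(x)):0\le i<v\}$ of disjoint support, applying Lemma~\ref{lem:disj_supp_form} (implication $(3)\Rightarrow(1)$, to the constacyclic code generated by $p(x)$) together with $p(0)=1$ forces $p(x)=\sum_{j=0}^{n/v-1}d^{\,j}x^{vj}$ for a unit $d$ with $d^{-n/v}=a$ (so the $c_j$ in the statement are $c_j=d^j$). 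The key elementary observation is that Schur powers respect this shape: for every $m\ge 1$,
\[
\codepow{p(x)}{m}=\sum_{j=0}^{n/v-1}\bigl(d^{\,j}\bigr)^{m}x^{vj}=\sum_{j=0}^{n/v-1}\bigl(d^{\,m}\bigr)^{j}x^{vj},
\]
which is again of the disjoint-support form, now with parameter $d^{m}$.

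Next I would invoke Theorem~\ref{thm:pattern_past_result}: for every $z\ge r'(C)$ the code $\codepow{C}{z\ell+1}$ is generated by $\codepow{p(x)}{z\ell+1}=\sum_j D_z^{\,j}x^{vj}$, where I set $D_z\defined d^{\,z\ell+1}\in\F^{\times}$. This generator has constant term $1$ and degree exactly $n-v$ (its leading coefficient $D_z^{\,n/v-1}$ is a unit), hence it is the unique minimal-degree, constant-term-$1$ generator of the ideal $\codepow{C}{z\ell+1}$ in $\F[x]/f(x)$. Therefore
\[
\codepow{C}{z\ell+1}=\codepow{C}{z'\ell+1}\iff D_z=D_{z'}.
\]
If one prefers to avoid citing uniqueness of generators here, the ``only if'' direction also follows directly: in any element $\sum_{i=0}^{v-1}c_i x^i\codepow{p(x)}{z\ell+1}$ of the code the constant term is $c_0$ while the coefficient of $x^{v}$ is $c_0D_z$, so equality of the two codes already forces $D_z=D_{z'}$.

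Finally, $D_{z+1}=D_z\cdot d^{\ell}$, so $(D_z)_{z\ge r'(C)}$ is a geometric progression in the finite group $\F^{\times}$ with common ratio $d^{\ell}$; multiplication by a unit is invertible, so the sequence is purely periodic with period $t=\operatorname{ord}(d^{\ell})$, and $t\mid |\F|-1$. By the displayed equivalence the sequence $\codepow{C}{r'(C)\ell+1},\codepow{C}{(r'(C)+1)\ell+1},\dots$ is then periodic with period $t\le|\F|-1\le|\F|$, i.e.\ it forms a cycle of length at most $|\F|$. The only step that needs a little care is the ``only if'' direction of the equivalence above — that genuinely distinct parameters $D_z$ really do yield genuinely distinct codes; everything else is bookkeeping on top of Theorem~\ref{thm:pattern_past_result} and Lemma~\ref{lem:disj_supp_form}.
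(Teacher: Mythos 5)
Your proof is correct, and at its core it follows the same route as the paper's: both rest on Theorem~\ref{thm:pattern_past_result} (for $z \ge r'(C)$ the code $\codepow{C}{z\ell+1}$ is generated by $\codepow{p(x)}{z\ell+1}$) together with the observation that the nonzero coefficients of $p(x)$ are powers of a unit $d$, so the coefficientwise powers cycle inside the finite group $\F^{\times}$. The differences are in the bookkeeping, and they favor your version in two ways. First, the paper multiplies by $\codepow{p(x)}{|\F|\ell}$ and asserts its coefficients $(c_j^{|\F|})^{\ell}$ all equal $1$; since $c_j^{|\F|}=c_j$ in a field of order $|\F|$, the exponent that actually works is $(|\F|-1)\ell$, i.e.\ one should use $|\F^{\times}|=|\F|-1$ (the stated bound ``at most $|\F|$'' is unaffected). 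Your formulation through $D_z=d^{\,z\ell+1}$ and $\operatorname{ord}(d^{\ell})\mid |\F|-1$ uses the correct exponent automatically and gives the sharper conclusion that the cycle length is exactly $\operatorname{ord}(d^{\ell})$, a divisor of $|\F|-1$. Second, you also prove the converse direction --- that $D_z\ne D_{z'}$ forces the codes to differ, via the constant term and the $x^{v}$-coefficient --- which the paper does not need for the stated upper bound but which is precisely what upgrades ``the code recurs within $|\F|-1$ steps'' to an exact period; your fallback argument avoiding uniqueness of minimal-degree generators is sound. Finally, your reduction of $p(x)$ to the form $\sum_j d^{\,j} x^{vj}$ via Lemma~\ref{lem:disj_supp_form} is legitimate, since $p(x)\mid g(x)\mid x^n-a$ and $v\mid n$ hold by the definition of the pattern polynomial.
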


\begin{proof}
By Theorem \ref{thm:pattern_past_result}, $\codepow{C}{z\ell+1} = (\codepow{p(x)}{z\ell+1})$ for $z \ge r'(C)$.  Thus, 
$$\codepow{C}{(r'(C)\ell+1) + |\F| \ell} = \inparen{ \codepow{p(x)}{r'(C)\ell+1} * \codepow{p(x)}{|\F|\ell}}.$$

We note that 
$$\codepow{p(x)}{|\F|\ell}=\sum_{j=0}^{\frac{n}{v}-1} (c_j^{|\F|})^{\ell} x^{v j} = \sum_{j=0}^{\frac{n}{v}-1} x^{v j}.$$
Therefore, $\codepow{C}{(r'(C)\ell+1) + |\F| \ell}  = \codepow{C}{r'(C)\ell+1}$.
\end{proof}

In Lemma \ref{lem:occurances}, we observe a bijection between nontrivial cyclic codes of length $n$ invariant under the Schur product and factors of $n$.

\begin{lemma}[Bijection between factors of $n$ and invariant codes]
	\label{lem:occurances}
	Consider a cyclic code $C$ of length $n$. Aside from the trivial subspace $C = \{0^n\}$ generated by $g(x) = 0$, there is a bijection between subspaces $C \subset \F^n$ where $C^{\langle 2 \rangle } = C$ and factors of $n$. In this bijection, for each factor, $k$, of $n$ the corresponding code $C$ has dimension $k$.

\end{lemma}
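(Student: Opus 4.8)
The plan is to show that, apart from $\{0^n\}$, the cyclic codes $C \subseteq \F^n$ with $\codepow{C}{2}=C$ are exactly the codes $C_k$ generated by $g_k(x) = \sum_{i=0}^{n/k-1} x^{ki} = (x^n-1)/(x^k-1)$, one for each divisor $k$ of $n$, and that $\dim(C_k)=k$; the bijection $k \mapsto C_k$ then follows at once. Throughout, a cyclic code is the $a=1$ case (so $\ell=1$), and we fix the generator $g$ by the normalization $g(0)=1$ (possible since $g \mid x^n-1$ forces $g(0)\neq 0$), so that $C$ and $g$ determine each other by Lemma \ref{lem:basis}.

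For the forward direction, suppose $C \neq \{0^n\}$ and $\codepow{C}{2}=C$, and put $k=\dim(C)$. Equality of the codes forces $\dim(\codepow{C}{2})=\dim(C)$, so by Lemma \ref{R15-236} the code $C$ has a basis $b_1,\dots,b_k$ of pairwise disjoint supports. Lemma \ref{lem:min_weight} gives $\wt(b_i)\ge n/k$, and disjointness of supports forces $n \ge \wt(b_1+\dots+b_k)=\sum_i \wt(b_i)\ge n$, so every $\wt(b_i)=n/k$ and in particular $k\mid n$. Hence Lemma \ref{lem:disj_supp_form} applies, and its condition (1) yields $g(x)=\sum_{i=0}^{n/k-1} d^{\,i} x^{ki}$ with $d$ a unit (the normalization $g(0)=1$ removes the overall scalar). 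It remains to force $d=1$, and this is the step that genuinely uses that $\codepow{C}{2}=C$ is stronger than mere equality of dimensions: the vector $\coeff(g)\schur\coeff(g)\in\codepow{C}{2}=C$ has value $d^{\,2i}$ in coordinate $ki$ and $0$ elsewhere, hence it has the same support $\{ki:0\le i<n/k\}$ as $\coeff(g)$ and the same $0$th coordinate ($=1$); their difference then lies in $C$, vanishes in coordinate $0$, and is supported on fewer than $n/k$ coordinates, so it is $0^n$ by Lemma \ref{lem:min_weight}. Thus $d^{\,2i}=d^{\,i}$ for all $i$, so $d=1$ and $g=g_k$. (Alternatively: $\codepow{C}{2}=C$ implies $\codepow{C}{m}=C$ for all $m$ by a one-line induction, so Theorem \ref{thm:main_gen} with $\ell=1$ gives $d=1$ directly.)

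For the converse, fix $k\mid n$ and set $g_k(x)=\sum_{i=0}^{n/k-1}x^{ki}$. Since $g_k=(x^n-1)/(x^k-1)$ divides $x^n-1$, the code $C_k=(g_k)$ is cyclic of dimension $n-\deg(g_k)=k$, and its basis $\{\coeff(x^j g_k(x)):0\le j<k\}$ consists of $0/1$ vectors (no wrap-around, since each has degree $<n$) with pairwise disjoint supports $\{j,j+k,\dots\}$. Calling these $b_0,\dots,b_{k-1}$, the code $\codepow{C_k}{2}$ is spanned by the products $b_i\schur b_j$, which vanish for $i\neq j$ and equal $b_i$ for $i=j$ (entries in $\{0,1\}$); hence $\codepow{C_k}{2}=C_k$.

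Combining the two directions, $k\mapsto C_k$ carries the divisors of $n$ onto the nontrivial cyclic codes invariant under the Schur square, it is injective because $\dim(C_k)=k$, and in this bijection the code attached to $k$ has dimension $k$, as claimed. The only step needing real care is ruling out $d\neq 1$: the dimension count alone would leave $C$ only with a disjoint-support basis and $d$ an arbitrary $(n/k)$-th root of unity, so one must exploit the full strength of $\codepow{C}{2}=C$ (or invoke Theorem \ref{thm:main_gen}); everything else is bookkeeping with the already-established lemmas.
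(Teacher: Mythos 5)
Your proof is correct, but it takes a noticeably more elementary route than the paper's. The paper proves both directions through the pattern-polynomial machinery: for a divisor $k$ it takes $m(k)=\sum_{i=0}^{n/k-1}x^{ki}$, uses Lemma \ref{lem:cycleEquiv_disj_divides} to see that $m(k)$ divides $x^n-1$ and generates a $k$-dimensional code that is its own pattern polynomial (so $r'(C)=0$), and then invokes Theorem \ref{thm:main_gen} with $\ell=1$ both to conclude $\codepow{C}{2}=C$ and, for surjectivity, to force the generator of any invariant code to equal $m(k)$ (i.e. $d=d^{\ell}=1$). You instead argue from the basic lemmas: Lemma \ref{R15-236} gives a disjoint-support basis, the weight count via Lemma \ref{lem:min_weight} gives $k \mid n$ (a hypothesis of Lemma \ref{lem:disj_supp_form} that the paper's proof inherits from the pattern-polynomial setup), Lemma \ref{lem:disj_supp_form}(1) gives $g(x)=\sum_i d^i x^{ki}$, and you force $d=1$ by the direct computation that $\coeff(g)\schur\coeff(g)-\coeff(g)\in C$ has weight below $n/k$ and hence vanishes; likewise your converse is a hands-on Schur-square computation with the $0/1$ disjoint-support basis rather than an appeal to Theorem \ref{thm:main_gen}. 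What your route buys is self-containment and transparency: it pinpoints exactly where the full hypothesis $\codepow{C}{2}=C$, as opposed to mere equality of dimensions, is used — the same distinction the paper highlights in the remark following Theorem \ref{thm:main_gen} — and it carefully justifies $k\mid n$ rather than assuming it. What the paper's route buys is brevity, since Theorem \ref{thm:main_gen} with $\ell=1$ delivers both the invariance of $(m(k))$ and the constraint $d=1$ in one stroke, which is precisely the shortcut you mention parenthetically. Both arguments are sound and yield the same bijection $k\mapsto (m(k))$ with $\dim = k$.
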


\begin{proof}

Let $m(k) = \sum_{i=0}^{\frac{n}{k}-1} x^{k \cdot i}$ for $k | n$. We will show that $m$ is a bijection between factors of $n$ and generators of nontrivial codes invariant under the Schur product. 

For any factor $k$ of $n$, let $g(x) = m(k) = \sum_{i=0}^{\frac{n}{k}-1} x^{k \cdot i}$. $(x^k -1) g(x) = x^n-1$ by Lemma \ref{lem:cycleEquiv_disj_divides} and $\dim((g(x)) = k$. Furthermore,  $g(x)$ is its own pattern polynomial so $r'(C) = 0$ and, by Theorem \ref{thm:main_gen}, $\codepow{C}{2} = C$. Thus $m$ maps factors of $n$ to nontrivial generators of $C$'s of the form $\codepow{C}{2} = C$. In doing so, $m$ is injective because $\deg(m(k)) \neq \deg(m(k'))$ for $k' \neq k$. 

For any nontrivial $C$ generated by $g(x)$ such that $\codepow{C}{2} = C$, we know $r'(C) = 0$. Therefore, $g(x)$ is its own pattern polynomial. Furthermore, by Theorem \ref{thm:main_gen}, $g(x) = \sum_{i=0}^{\frac{n}{k}-1} d^i x^{ki}$ where $d^\ell = d^1 = 1$. So for any generator $g(x)$ such that that $C^{\langle 2 \rangle } = C,$ it is clear that $g(x) = m(k)$. Thus, $m$ is surjective.
Hence, $m$ is bijective as desired.

\end{proof}

\begin{remark}

We observe that a constacyclic Hilbert sequence can include a dimension increase of $1$. Consider a code $C$ over $\F_3 / x^6-1$ generated by $g(x) = g(x) = x^4+2x^3+x+2$. Then $\dim(C) =2$ and $C = (\{(2,1, 0, 2,1,0), (0,2,1,0,2,1) \})$. Furthermore, 
$$\codepow{C}{2} = (\{(2,1,0,2,1,0)*(2,1,0,2,1,0), (2,1,0,2,1,0)*(0,2,1,0,2,1)\})$$
$$= (\{(1,0,0,1,0,0), (0,1,0,0,1,0)\})= (\{(1,0,0,1,0,0)\}).$$
Therefore, $\codepow{C}{2} = (1 + x^3)$, so $\dim(C^{\langle 2 \rangle }) =\dim(C) +1$.

\end{remark}

\section{Hilbert sequences in practice}

To provide some intuition about how the dimensions of constacyclic codes behave in practice, 
we generated the Hilbert sequences for cyclic and negacyclic codes in the rings 
$\F_q[x]/(x^{50} \pm 1)$.  We chose small a small subset of primes 
$q \in \inset{ \operatorname{nextprime}(2^i) \suchthat i \in \inset{8,\ldots,16} }$, 
and restricted our plots to those primes where the ring $\F_q[x]/(x^{50} \pm 1)$ had 
at most 1000 generators that generated codes of rate less than $1/2$.

For all tests, the Hilbert sequence stabilized at length 5, and the fraction of generators 
that produced given Hilbert sequence are shown in Figures \ref{fig:negahilbert}, \ref{fig:cyclichilbert}.
The $x$-axis shows the dimensions of the codes, thus the label $5$-$12$-$17$-$18$-$18$ 
corresponds to constacyclic codes, $C$, with 
$\dim \inparen{ \codepow{C}{1} } = 5$,
$\dim \inparen{ \codepow{C}{2} } = 12$,
$\dim \inparen{ \codepow{C}{3} } = 17$,
$\dim \inparen{ \codepow{C}{4} } = 18$,
and $\dim \inparen{ \codepow{C}{5} } = 18$.

\begin{figure}[h]
\includegraphics[width=.9\textwidth]{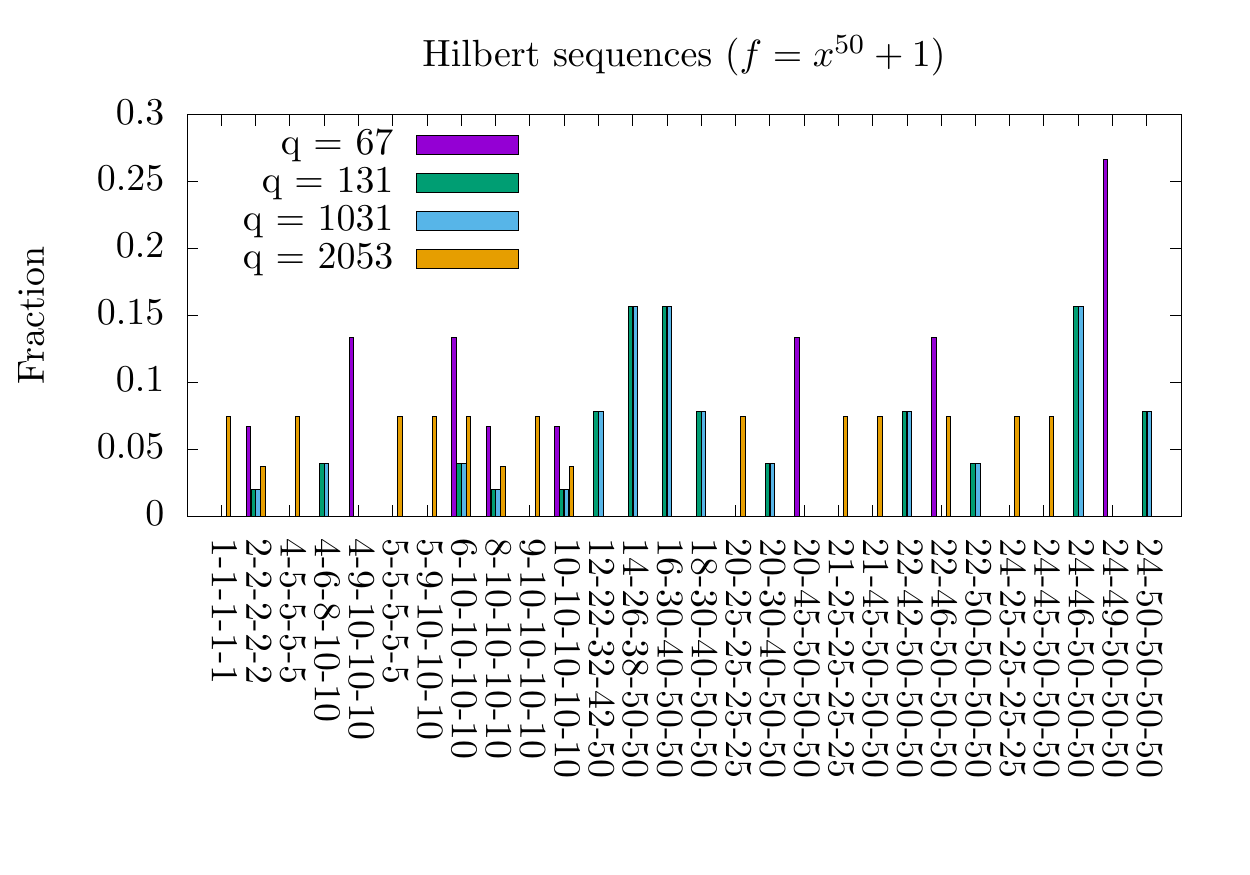}
\caption{Negacyclic Dimension Sequences. \label{fig:negahilbert}}
\end{figure}

\begin{figure}[h]
\includegraphics[width=.9\textwidth]{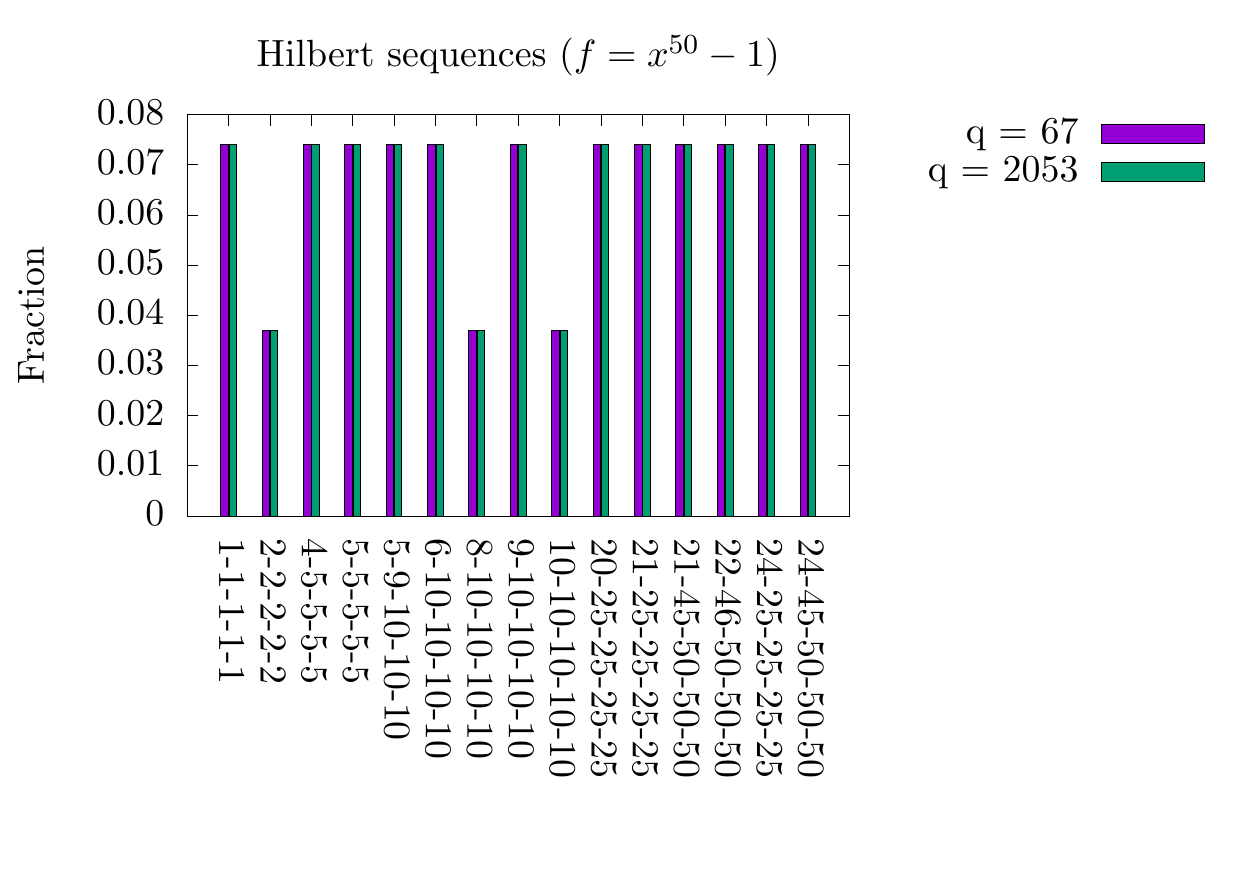}
\caption{Cyclic Hilbert Sequences.  There were only 27 generator polynomials that generated codes of rate less than $1/2$, and every Hilbert sequence appeared either once or twice.   \label{fig:cyclichilbert}}
\end{figure}

\section{Conclusion and Future Directions}
Overall, it is clear that constacyclic codes have structured growth under the Schur product. One can efficiently identify the dimension the code will grow to, the generators of the powers of the code past $r'(C)$, and when the code and or powers of the code are invariant under the Schur product. This inherent structure is important to consider when developing and or performing cryptanalysis of cryptosystems involving constacyclic or related codes to avoid inducing or to exploit vulnerabilities related to such properties.

The future directions of this work are twofold: improving the results proven in this paper and extending the results to other areas. In the first category, can the complexity bound for acquiring the pattern polynomial be improved from $O(w^2)$? Can $r'(C)$ be computed in time faster then $O(n^4 log(n))$? Can the time used to acquire generators of powers of $C$ be written in terms of the input length rather than $n$, perhaps by using sparse matrix operations? Can the chain of generators for $C,C^{\langle \ell+1 \rangle}, C^{\langle 2 \ell + 1 \rangle}, \ldots, C^{\langle (r'(C)-1) \ell + 1 \rangle}$ be obtained more efficiently through using properties of the pattern polynomial? In the second category, can any of these results be extended to the context of cyclic lattices? Can the techniques used in this paper be modified to apply to similar yet different results in cyclic lattices? Micciancio and Regev \cite{Micciancio08lattice-basedcryptography} wondered if one could safely use cyclic lattices in LWE-based cryptosystems to improve efficiency. Can such extensions be used to justify or preclude doing so? Can properties of constacyclic codes under the Schur product be used to design and or break future cryptosystems which use constacyclic codes?

\section{Acknowledgments}
This work was supported by the National Science Foundation under grants no. CNS-1651344 and CNS-1513671.



\bibliography{the_bib}
\bibliographystyle{spphys}       


\end{document}